\DeclareMathOperator*{\argmax}{arg\,max}
\newcommand\vldbdoi{XX.XX/XXX.XX}
\newcommand\vldbvolume{14}
\newcommand\vldbissue{1}
\begin{document}
\title{Efficient Size Constraint Community Search over Heterogeneous Information Networks}

%%
%% The "author" command and its associated commands are used to define the authors and their affiliations.
\author{Xinjian Zhang}
\affiliation{%
  \institution{Swinburne University of Technology}
  \city{Melbourne}
  \country{Australia}
}
\email{xinjianzhang@swin.edu.au}
% \email{swin.edu.au}

\author{Lu Chen}
\affiliation{%
  \institution{Swinburne University of Technology}
  \city{Melbourne}
  \country{Australia}
}
\email{luchen@swin.edu.au}

\author{Chengfei Liu}
\authornote{Corresponding author}
\affiliation{%
  \institution{Swinburne University of Technology}
  \city{Melbourne}
  \country{Australia}
}
\email{cliu@swin.edu.au}

\author{Rui Zhou}
\affiliation{%
  \institution{Swinburne University of Technology}
  \city{Melbourne}
  \country{Australia}
}
\email{rzhou@swin.edu.au}

\author{Bo Ning}
\affiliation{%
  \institution{Dalian Maritime University}
  \city{Dalian}
  \country{China}
}
\email{ningbo@dlmu.edu.cn}

%
% The abstract is a short summary of the work to be presented in the
% article.
\begin{abstract}
The goal of community search in heterogeneous information networks (HINs) is to identify a set of closely related target nodes that includes a query target node. In practice, a size constraint is often imposed due to limited resources, which has been overlooked by most existing HIN community search works.  
%However, most existing HIN community search work does not consider size constraints. 
%However, most existing size-bounded community search methods are designed for homogeneous graphs and cannot be directly applied to HINs, as nodes of the same type may not be directly connected. 
In this paper, we introduce the size-bounded community search problem to HIN data. Specifically, we propose a refined $(k, \mathcal{P})$-truss model to measure community cohesiveness, aiming to identify the most cohesive community of size $s$ that contains the query node. 
We prove that this problem is NP-hard. 
To solve this problem, we develop a novel B\&B framework that efficiently generates target node sets of size $s$. We then tailor novel bounding, branching, total ordering, and candidate reduction optimisations, which enable the framework to efficiently lead to an optimum result. We also design a heuristic algorithm leveraging structural properties of HINs to efficiently obtain a high-quality initial solution, which serves as a global lower bound to further enhance the above optimisations. 
Building upon these, we propose two exact algorithms that enumerate combinations of edges and nodes, respectively.
Extensive experiments on real-world datasets demonstrate the effectiveness and efficiency of the proposed methods.
\end{abstract}

\maketitle

%%% do not modify the following VLDB block %%
%%% VLDB block start %%%
% \pagestyle{\vldbpagestyle}
% \begingroup\small\noindent\raggedright\textbf{PVLDB Reference Format:}\\
% \vldbauthors. \vldbtitle. PVLDB, \vldbvolume(\vldbissue): \vldbpages, \vldbyear.\\
% \href{https://doi.org/\vldbdoi}{doi:\vldbdoi}
% \endgroup
\begingroup
\renewcommand\thefootnote{}\footnote{\noindent This work is licensed under the Creative Commons BY-NC-ND 4.0 International License. Visit \url{https://creativecommons.org/licenses/by-nc-nd/4.0/} to view a copy of this license. For any use beyond those covered by this license, obtain permission by emailing \href{mailto:info@vldb.org}{info@vldb.org}. Copyright is held by the owner/author(s). Publication rights licensed to the VLDB Endowment. \\
\raggedright Proceedings of the VLDB Endowment, Vol. \vldbvolume, No. \vldbissue\ %
ISSN 2150-8097. \\
\href{https://doi.org/\vldbdoi}{doi:\vldbdoi} \\
}\addtocounter{footnote}{-1}\endgroup
%%% VLDB block end %%%

%%% do not modify the following VLDB block %%
%%% VLDB block start %%%
% \ifdefempty{\vldbavailabilityurl}{}{
% \vspace{.3cm}
% \begingroup\small\noindent\raggedright\textbf{PVLDB Artifact Availability:}\\
% The source code, data, and/or other artifacts have been made available at \url{\vldbavailabilityurl}.
% \endgroup
% }
%%% VLDB block end %%%
%\newtheorem{lemma}{Lemma}
%\newtheorem{example}{Example}

%\definecolor{orchid}{RGB}{150,55,255}
\definecolor{orchid}{RGB}{0,0,0}

\newtheoremstyle{exampstyle}
{2pt} % Space above
{2pt} % Space below
{\itshape} % Body font
{} % Indent amount
{\scshape} % Theorem head font
{.} % Punctuation after theorem head
{.2em} % Space after theorem head
{} % Theorem head spec (can be left empty, meaning `normal')
\theoremstyle{exampstyle}
\let\lemma\relax
\let\example\relax
\let\definition\relax

\newtheorem{definition}{Definition}
\newtheorem{lemma}{Lemma}
\newtheorem{example}{Example} 
\newtheorem{ruledef}{rule}

\newtheorem{prop}{Property}
\newcommand{\mP}{\mathcal{P}}

\newcommand{\mA}{\mathcal{A}}

\newcommand{\mR}{\mathcal{R}}

\newtheorem{obv}{Observation}
\newtheorem{stopC}{Stop condition}

\SetAlFnt{\small\normalfont}
\SetAlCapHSkip{0em}
\SetAlgoSkip{}
\setlength\floatsep{1.25\baselineskip plus 3pt minus 3pt}

\setlength\textfloatsep{1.25\baselineskip plus 3pt minus 3pt}
\setlength\intextsep{1.25\baselineskip plus 3pt minus 3 pt}

\setlength{\abovecaptionskip}{0pt}
\setlength{\belowcaptionskip}{-5pt}

\setlength{\abovedisplayskip}{3pt}
\setlength{\belowdisplayskip}{3pt}
\section{Introduction}\label{sect:intro}
Heterogeneous information networks (HINs) comprise various types of nodes and edges, enabling them to effectively model real-world scenarios, including bibliographic, e-commerce, and social networks. As shown in Fig. \ref{fig:expofhin}, in a bibliographic network, nodes can represent entities such as authors, papers, topics, and venues. The relationships among these nodes include authors writing papers, papers having topics, and papers being published in venues. The structural representation of potential relationships between node types is referred to as a schema, as illustrated in Fig. \ref{fig:expofschema}.

Analysing HIN data is fundamental yet critical. Among various analytic techniques, HIN communities search, i.e., discovering correlated nodes of a given query node, with the same node type (target type), has attracted great attention. The HIN community search has a myriad of applications, such as recommendation~\cite{yu2014personalized}, team formation~\cite{lappas2009finding}, and identification of protein functions~\cite{dittrich2008identifying}. Most of the HIN community search methods~\cite{fang2020effective,jiang2022effective,yang2020effective} apply the meta-path ($\mathcal{P}$)~\cite{sun2011pathsim} to first establish virtual edges ($\mP$-pairs) over target nodes, which forms a virtual graph ($G_{P}$) based on $\mP$-pairs. This step is necessary since most HINs focus on recording relations between nodes of different types (i.e., no relations exist between nodes of the same type). Then, classical cohesive subgraph models such as $k$-core~\cite{seidman1983network,malliaros2020core} and $k$-truss~\cite{cohen2008trusses,wang2012truss} are applied to the virtual graph, which leads to popular HIN community models known as $(k,\mathcal{P})$-core and $(k,\mathcal{P})$-truss. 
What is more, the $(k,\mathcal{P})$-core model is then extended by considering other cohesiveness, such as spatial~\cite{fang2017effective,al2020topic}, influence~\cite{zhou2023influential}, etc., apart from target nodes' cohesiveness. 
%Alternatively, some works~\cite{} establish correlations over multiple types of vertices using the concept of motif, where a motif is defined on the network structure of an HIN other than the schema of an HIN. These works then propose a motif-clique model that aims to discover communities with multiple types of vertices. 

\begin{figure}[t]
    \centering
    \subfigure[An HIN]{
        \label{fig:expofhin}
        \includegraphics[width=.355\textwidth]{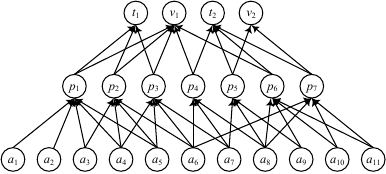}
    }
    \subfigure[Schema]{
        \label{fig:expofschema}
        \includegraphics[width=.095\textwidth]{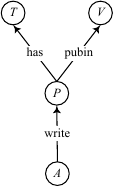}
    }
    \caption{An example of HIN with its schema}
    \label{fig:dblp}
\end{figure}

\begin{figure}[t]
  \centering
  \includegraphics[width=0.7\linewidth]{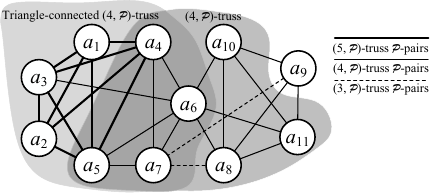}
  \caption{$G_P$ for Fig. 1 with $\mP=(A, P, A)$}
  \label{fig:expofGp}
\end{figure}
However, in many real applications, it is often necessary to impose a size constraint on the identified community due to budget or resource limitations. This motivates us to study the problem of community search in HIN with a size constraint. To the best of our knowledge, there is no existing work that focuses on this problem on HIN data.

\noindent\textbf{Our community model}. We first introduce a refined $(k,\mP)$-truss model, which serves as a cohesiveness measure. Then, we discuss the size constraint and the optimisation objective.

%%TODO: edge vs triangle connectivity
\noindent\textit{\underline{Triangle connected $(k,\mP)$-truss}}. We adapt the mentioned $(k,\mP)$-truss to capture the cohesiveness of the target community. Given a query meta-path $\mP$, if there is an instance of $\mP$ connecting two target nodes, then the two nodes are called a $\mP$-pair. Given three target nodes, they form a triangle if any two of the nodes are in a $\mP$-pair. A set of $\mP$-pairs is called a $(k,\mP)$-truss if every $\mP$-pair in the set forms at least $k$-2 triangles with other $\mP$-pairs in the set.
{\color{orchid}Based on the definition, given a query node $q$, it can only be contained in one single connected community. This community may be formed by multiple dense parts, where some of these parts share exactly one node in common, i.e., they are `weakly' connected.} 
%This community may consist of two densely connected subparts sharing a common node, which can result in the whole community being loosely connected for a given $k$ and $\mP$. 
Such a structure may not be ideal for the community search problem. Therefore, to further enhance the effectiveness of communities searched from a query node, we introduce the triangle connectivity into the $(k,\mP)$-truss model~\cite{yang2020effective}. For a set of $\mP$-pairs that is a $(k,\mP)$-truss, it further enforces that every two triangles either share one $\mP$-pair or can be reached by a sequence of triangles s.t. the adjacent triangles in the sequence share one $\mP$-pair. By doing so, one query node can be part of multiple communities, and the community could be denser, making the community model more effective. 

\noindent\textit{\underline{Size constraint and optimisation objective}}. As discussed, the existing community search problems dealing with HIN data do not consider the size constraint that naturally exists in real applications, and they are likely to maximise the community size. In contrast, we introduce a size constraint $s$ to limit the nodes that form the triangle connected $(k,\mP)$-truss. Accordingly, we take cohesiveness as the optimisation objective, i.e., we aim to find the maximum possible $k^{*}$ such that there exists a triangle connected $(k^{*}, \mathcal {P})$-truss containing the query node. We shorten our studied problem, i.e., the \textbf{S}ize-constraint \textbf{C}ommunity \textbf{S}earch in \textbf{H}IN maximising cohesiveness problem, as the SCSH problem. 

\noindent\textbf{Applications.} As discussed, the SCSH problem has many real-world applications. Below, we present two of them.

\begin{itemize}
    \item To organise an academic seminar within Fig.~\ref{fig:expofhin} and invite a key expert $a_6$, it is desirable that participants have direct collaborations, which can be modelled by the meta-path $\mP = (\textit{A, P, A})$. Derived by $\mP$, the $G_P$ is shown in Fig.~\ref{fig:expofGp}. The node set $\{a_i \mid 1 \le i \le 11\}$ forms a triangle-connected $(3, \mP)$-truss. If the venue can accommodate at most 7 participants, there are two possible communities of size 7: $\{a_1, a_2, a_3, a_4, a_5, a_6, a_7\}$ with triangle-connected trussness of 4 and $\{a_4, a_5, a_6, a_7, a_8, a_{10}, a_{11}\}$ with triangle-connected trussness of 3 (even with trussness of 4). Therefore, the organiser will invite the first group, $\{a_1, a_2, a_3, a_4, a_5, a_6, a_7\}$, to the seminar.

    \item When a user searches for product $q$, the platform can leverage the meta-path  $\mP = (\textit{Product, User, Review, User, Product})$ to identify a set of products containing $q$ that are semantically and closely related in terms of $\mP$. The platform then recommends products from this set. However, recommending too many items may overwhelm the user. Therefore, limiting the number of highly relevant products simplifies decision-making and enhances the overall user experience.
\end{itemize}

Our model identifies tightly connected, size-bounded communities for such scenarios.

% Our proposed model addresses such scenarios by discovering size-constrained communities with strong internal connectivity, ensuring that the selected entities are both cohesively linked and of a practically manageable size.

\noindent
\textbf{Technical challenges and state-of-the-art}. We prove that the SCSH problem is NP-hard, which is indeed very different from the well-studied $(k,\mP)$-truss problem without a size constraint. 
There are related works for homogeneous networks addressing the most cohesive $k$-core~\cite{yao2021efficient} and $k$-truss~\cite{zhang2023size} problems. 
Both approaches solve the problems using an inclusion-and-exclusion-based node set generation framework equipped with corresponding optimisations such as branching, bounding, and candidate set reduction techniques. \cite{yao2021efficient} is advanced in terms of devising upper bounds, as the node set generation-based framework naturally supports cohesiveness defined on nodes more effectively. Due to the difficult of devising effective upper bounds, \cite{zhang2023size} is more advanced in terms of devising lower bounds. 

\noindent\textbf{Technical gaps}. Although we can get ideas from \cite{yao2021efficient} and \cite{zhang2023size}, we find the following technical gaps. First, all of them are dedicated to homogeneous networks. In fact, HINs possess certain desirable properties that can accelerate the search, which will be explored in this paper. Second, all existing works use an inclusion-and-exclusion-based node set generation framework, which is efficient for generating all combinations of $n$ nodes. However, our problem only requires exploring $s$-size combinations of $n$ nodes, and practically, $s$ is far less than $n$. The inclusion-and-exclusion-based node set generation leads to a recursion depth of $n$ rather than $s$. We need a more efficient way of doing so. Last but not least, even the most closely related work \cite{zhang2023size} does not consider triangle connectivity when devising its techniques, which could be improved.     

\noindent\textbf{Our solution}. To fill the above research gaps, we present a novel branch-and-bound algorithm for the SCSH problem based on the lexicographic generation framework. This naturally bounds the recursion depth in terms of $s$. Such an improvement allows us to attempt both the edge set and the node set enumeration-based approaches.  
In terms of bounding techniques, we first take advantage of the HIN property, i.e., taking polynomial time to get large pairwise connected $\mP$-pairs, to derive near-optimum results, serving as a lower bound. 
This lower bound can greatly reduce the candidate size. We then devise novel and effective upper bounds for both node set and edge set enumeration approaches, taking into account triangle connectivity. 
For the edge set enumeration, we propose an upper bound by enlarging the trussness of the edge in the partial result with the minimum trussness, considering the maximum triangle connected triangles that can be included from the candidate set. 
{\color{orchid} However, the above upper bound cannot be applied to the node set enumeration approach. 
This is because, for node set enumeration, we can remove some edges with low trussness from the node set induced subgraph to improve the result, i.e., deriving an upper bound on edges with minimum trussness does not make sense here.} 
We alternatively seek to increase the trussness of the high trussness edges in the partial result, in order to satisfy the necessary condition for improving the overall trussness and to derive a reasonable upper bound. Using the lower and upper bounds, we can effectively and efficiently bound the search towards the optimum result. Furthermore, we propose a novel total search order and branching strategy. Together with all these optimisations, the edge set generation based algorithm runs in $\mathcal{O}(|N(q,G_{P})|\binom{|E(G)|}{\frac{(s-1)(s-2)}{2}}|E(G_{P})|^{1.5})$  with the most advanced upper bounds and the node set generation based algorithm runs in $\mathcal{O}(|N(q,G_{P})|\binom{|V(G)|}{s-2}|E(G_{P})|^{1.5})$ with a reasonable upper bound, where $G_{P}$ is the virtual graph consisting of all $\mP$-pairs in HIN. 
In practice, the node set generation approach runs one order of magnitude faster than the edge set generation approach, and the edge set generation approach is, on average, 40\% faster than the method adapted from SOTA. 

%Current approaches either lack size constraints for heterogeneous graphs or the solutions designed for homogeneous graphs cannot be directly applied to our problem, as nodes of the same type may not have direct connections. From a modeling perspective, they also overlook the assumption that the query node may belong to multiple communities. Technically, the method proposed in \cite{yao2021efficient} is based on node degree, which differs from our problem setting. Although \cite{zhang2023size} adopts $k$-truss as the cohesiveness measure, which is closely related to our problem, it introduces a branch-checking algorithm based on connections rather than triangles, resulting in it not being sufficiently tight.

Our principal contributions are summarized as follows.
\begin{itemize}
    \item We introduce the problem of size-bounded community search in heterogeneous information networks, maximising the cohesiveness. \hfill{\textbf{Section}~\ref{sec:def}}
    \item We introduce a novel search framework that brings low time complexity for both edge and node set enumeration based methods. \hfill{\textbf{Section}~\ref{sec:frame}}
    \item We develop a novel heuristic algorithm that efficiently identifies a feasible community, serving as a tighter lower bound. \hfill{\textbf{Section}~\ref{sec:lb}}
    \item We propose novel bounding, branching, total order, and reduction optimisations, which lead to two efficient exact algorithms.
    
    \hfill{\textbf{Sections}~\ref{sec:bound} to \ref{sec:reduction}}
    
    \item We validate our approaches through extensive experiments on real-world datasets, demonstrating their effectiveness and efficiency. \hfill{\textbf{Section}~\ref{sec:exp}}
\end{itemize}

%\noindent
%\textbf{Roadmap.} The rest of this paper is organised as follows: Section \ref{sec2} introduces the relevant definitions and problem statement. In Section \ref{sec3}, we present the heuristic algorithm SCSHHeu. Section \ref{sec4} discusses the $P$-pair enumeration method,  SCSHEP$^+$, including reduction rules, upper-bound techniques, and search ordering. Section \ref{sec5} discusses the advanced node enumeration method, SCSHEN$^+$. Section \ref{sec6} presents experimental results demonstrating the effectiveness and efficiency of the proposed methods. Section \ref{sec7} reviews related work, and Section \ref{sec8} concludes the paper.
\section{Problem Formulation}
\label{sec:def}
\subsection{Preliminaries}

\begin{definition}
	\textbf{Heterogeneous Information Network (HIN):} An HIN is a directed graph $H(V,E)$ with a node type mapping function $\phi : V\rightarrow \mathcal{A}$ and an edge type mapping function $\psi : E\rightarrow \mathcal{R}$, in which every $v\in V$ has a node type $\phi(v)\in \mathcal{A}$ and every $e\in E$ has an edge type $\psi(e)\in \mathcal{R}$.
	\label{HINdef}
\end{definition}

\begin{definition}
	\textbf{HIN Schema:} Given an HIN $H=(V, E)$ with mappings $\phi:V \rightarrow \mathcal{A}$ and $\psi: E \rightarrow \mathcal{R}$, its schema is a graph defined over node types $\mathcal{A}$ and edge types $\mathcal{R}$, denoted as $T_H(\mathcal{A},\mathcal{R})$.
	\label{Schemadef}
\end{definition}

%A schema defines all possible edges between all node types in a HIN. 
% For example, Fig. \ref{fig:expofschema} illustrates the schema of the DBLP network shown in Fig. \ref{fig:expofhin}. The node types $A$, $P$, $T$, and $V$ represent authors, papers, topics, and venues, respectively. The edge between $A$ and $P$ denotes a ``write'' relationship, while the edge between $P$ and $T$ indicates a ``has'' relationship. Additionally, these relationships are reversible; for instance, the edge from $P$ to $A$ represents a ``written by'' relationship.

Fig.~\ref{fig:expofschema} shows the DBLP schema with node types $A$, $P$, $T$, and $V$ (authors, papers, topics, venues) and where edges are reversible (e.g., $A$–$P$ denotes ``write'' while $P$–$A$ denotes ``written by'').

\begin{definition}
	\textbf{Meta-path:} A meta-path $\mP$ is a path defined on a schema $T_H=(\mathcal{A},\mathcal{R})$, and is denoted in the form $A_1\stackrel{R_1}{\longrightarrow} A_2\stackrel{R_2}{\longrightarrow}\cdot \cdot \cdot \stackrel{R_l}{\longrightarrow} A_{l+1}$, where $l$ is the length of $\mP$, $A_i\in \mathcal{A}$, and $R_i\in \mathcal{R}(1\leq i\leq l)$. For simplicity, we denote $\mP$ as an ordered set in forms of $(A_1, A_2, \ldots,  A_{l+1})$.
	\label{Metapathdef}
\end{definition}

Given that relationships in an HIN are reversible, meta-paths are thus reversible. If reversed meta-path $\mP^{-1}$ equals $\mP$, it is symmetric (e.g., $(A, P, A)$). In this paper, all meta-paths are assumed symmetric.

%As mentioned earlier, relationships in an HIN are reversible; thus, meta-paths are also reversible. We use $\mP^{-1}$ to denote the inverse of a meta-path $\mP$. When $\mP^{-1}=\mP$, we refer to $\mP$ as a symmetric meta-path. For example, $(A, P, A)$ is symmetric. Therefore, unless stated otherwise, all meta-paths discussed in this context are assumed to be symmetric.

\begin{definition}
	\textbf{Instance of meta-path:} Given a path $p=a_1\rightarrow a_2\rightarrow \cdot \cdot \cdot \rightarrow a_{l+1}$ and a meta-path $\mP=(A_1, A_2, \ldots, A_{l+1})$, $p$ is an instance of $\mP$, if $\forall i\in [1,l]$, the node $a_i$ and edge $e_i=(a_i,a_{i+1})$ satisfy $\phi(a_i)=A_i$ and $\psi(e_i)=R_i$.
	\label{Instancedef}
\end{definition}

\noindent\textbf{$\mP$-pair and $G_{P}$}. The source and sink nodes of a $\mP$ instance form a $\mP$-pair. For example, in Fig. \ref{fig:expofhin}, the path $a_1 \rightarrow p_1 \rightarrow a_4$ is an instance of the meta-path $(A, P, A)$, and $(a_1, a_4)$ is a $\mP$-pair. %Here, we use lowercase letters (e.g., $a_1$) to denote nodes in an HIN, and uppercase letters (e.g., $A$) to denote node types. 
%Two nodes $v$ and $u$ are defined as $\mP$-neighbors if there exists an instance of $\mP$ between them. 
After identifying all $\mP$-pairs, we can construct \textbf{a derived homogeneous graph, denoted as $G_P$}. Fig. \ref{fig:expofGp} illustrates an example of $G_P$ derived from Fig. \ref{fig:expofhin} based on the meta-path $(A, P, A)$.

% \begin{definition}
%     \textbf{$k$-truss:} Given a homogeneous graph $G$, a $k$-truss in $G$ is a subgraph $\mathcal{G}$ such that for every edge $e \in E(\mathcal{G})$, $\sup_{\mathcal{G}}(e) \geq k-2$, where $\sup_{\mathcal{G}}(e)$ denotes the number of triangles containing $e$ in $\mathcal{G}$.
%     \label{ktrussdef}
% \end{definition}

\begin{definition}
    \textbf{Triangle ($b$-triangle in~\cite{yang2020effective}):} Given an HIN $H$ and a meta-path $\mP$, a triangle $\bigtriangleup$ is a triplet of nodes $(u, v, w)$ such that every pair of nodes in $\bigtriangleup$ forms a $\mP$-pair.
    \label{def:btriangle}
\end{definition}

In \cite{yang2020effective}, two types of triangles are discussed according to the constraints over the instances of $\mP$ forming a triangle. In this paper, we focus more on $b$-triangles and will call them triangles from now on for brevity.  

% \begin{definition}
%     \textbf{$b$-support\cite{yang2020effective}:} Given an HIN $H$ and a meta-path $P$, and a set $S$ of $P$-pairs, the $b$-support of a $P$-pair $(u,v)$ regarding $S$ is the number of $b$-triangles, which are formed by $P$-pairs in $S$, containing it.
%     \label{def:bsupport}
% \end{definition}

\begin{definition}
    \textbf{Support:} Given a set $S$ of $\mP$-pairs for $H$, the support of a $\mP$-pair $(u,v)$ regarding $S$ is the number of triangles, which are formed by $\mP$-pairs in $S$, containing $(u, v)$.
    \label{def:bsupport}
\end{definition}

\begin{definition}
    \textbf{($k$, $\mP$)-truss~\cite{yang2020effective}:} Given a set $S$ of $\mP$-pairs for $H$, the ($k$, $\mP$)-truss is the maximum set of $\mP$-pairs, denoted by $S'$, such that for each $\mP$-pair, its support within $S'$ is at least $k-2$.
    \label{def:kPbtruss}
\end{definition}

\begin{definition}
    \textbf{Trussness of $\mP$-pair:} The trussness of a $\mP$-pair $(u,v)$ regarding $S$, denoted by $\tau_S(u,v)$, is the largest $k$ such that $\mP$-pair $(u,v)$ is contained in a ($k$, $\mP$)-truss of $S$.
\end{definition}

\begin{definition}
    \textbf{Trussness of node:} The trussness of a node $v$ regarding $S$, denoted by $\tau_{S}(v)$, is defined as the maximum trussness among all $\mP$-pairs containing $v$, i.e., $\tau_{S}(v) = \max\{\tau_{S}(u,v) \mid (u,v) \in S\}$.
    \label{def:vtruss}
\end{definition}

\subsection{Problem formulation}
Based on the existing definitions for $(k, \mP)$-truss, we further introduce the well-studied concepts of triangle adjacency and triangle connectivity to enhance the efficacy of $(k, \mP)$-truss and then introduce the size constraints, which lead to our studied SCSH problem.   

\begin{definition} 
    \textbf{Triangle adjacency:} Two triangles, $\bigtriangleup_s$ and $\bigtriangleup_t$, are adjacent if they share a $\mP$-pair, i.e., $|\bigtriangleup_s \cap \bigtriangleup_t| = 2$.
    \label{Edgeadjacentdef} 
\end{definition}

%A triangle $\bigtriangleup$ is a $k$-triangle if $\min_{(u,v)\in \bigtriangleup} \tau(u,v)\geq k$. 
\noindent\textbf{Triangle connectivity}. Two triangles, $\bigtriangleup_s$ and $\bigtriangleup_t$, are triangle connected, if there exists a sequence of triangles in $S$, $\bigtriangleup_s, \dots, \bigtriangleup_n$ ($n \geq 2$) such that for $1 \leq i < n$, $|\bigtriangleup_i \cap \bigtriangleup_{i+1}| = 2$ and $\bigtriangleup_1= \bigtriangleup_s$, $\bigtriangleup_n= \bigtriangleup_t$. 
%Two $P$-pairs, $(u_1,v_1)$ and $(u_2,v_2)$, are $k$-triangle connected, denoted as $(u_1,v_1) \stackrel{k}{\leftrightarrow} (u_2,v_2)$, if either: (1) $(u_1,v_2)$ and $(u_2,v_2)$ belong to the same $k$-triangle, or (2) $(u_1,v_1) \in \bigtriangleup_s$ and $(u_2,v_2) \in \bigtriangleup_t$, where $\bigtriangleup_s \stackrel{k}{\leftrightarrow} \bigtriangleup_t$.

\begin{definition} 
    \textbf{Triangle-connected ($k$, $\mP$)-truss:} A set of $\mP$-pairs $S$ is a triangle-connected ($k$, $\mP$)-truss if it satisfies: (1) $S$ is a ($k$, $\mP$)-truss, and (2) $\forall$ $\bigtriangleup_1$ and $\bigtriangleup_2$ $\in$ $S$, $\bigtriangleup_1$ is triangle connected with $\bigtriangleup_2$  within $S$. \label{ktrusscommunitydef}
\end{definition}

\begin{table}[t]
\caption{Frequently used notations}
\label{notion}
\small
\centering
\begin{tabular}{
  >{\centering\arraybackslash}m{1.2cm}|
  >{\raggedright\arraybackslash}m{6.5cm}
}
\toprule
\textbf{Notation} & \textbf{Definition} \\
\midrule
$\mathcal{P}$ & A meta-path \\
$G_{P}$, $G_{P}[\cdot]$& Graph consists of $\mP$-pairs,  nodes (edges) induced subgraph of $G_{P}$ \\
$C$, $R$ & Partial result, candidate set (a set of edges or nodes) \\
%$R$ & The candidate set (edges or nodes) \\
$C^*$ & Global optimum results or optimum result up to the time \\
%$\mathcal{S}$ & The set of seed $\mP$-stars. \\
%$\tau_{\triangle}(C)$ & The triangle-connected trussness of $C$ \\
$UB(C, R)$ & The upper bound of triangle-connected trussness for the recursion state $(C, R)$ \\
$C^{\le \tau}$ & The set of edges that trussness is not greater than $\tau$ \\
$E_{min}$ & The minimum trussness among the edges in $C$ \\
$|\triangle_{opt}|$ & The optimistic number of capable triangles \\
$|E_{opt}|$ & The optimistic number of edges \\
$\Phi_E(u)$ & The set of nodes dominated by $u$ based edges \\
$\Phi_\triangle(u)$ & The set of nodes dominated by $u$ based triangles \\
\bottomrule
\end{tabular}
\end{table}

% \begin{example}
%     Take Fig. \ref{fig:expofhin} as an example, $P=(APA)$. The set $\Phi=\{(a_1,a_i)|4\leq i\leq 7\} \cup \{(a_i,a_j)|4 \leq i,j,\leq 10\}$ is a ($4$,$P$)-truss, but a triangle-connected ($3$,$P$)-truss, as $\tau(a_5, a_8) = 3$.

% \end{example}

\noindent\textbf{SCSH problem}. Given an HIN $H(V, E)$, a query node $q$, a community constraint size $s\in \mathbf{N}^{+}$, and a meta-path $\mP$, i.e. the input is $(H, \mP, q, s)$. The objective is to find a community $G_{P}^{\prime}=(V^{*}, S^{*})$ (formed by a set of $\mP$-pairs), which satisfies the following conditions:
\begin{enumerate}
    \item Query awareness: $q\in V^{*}$;
    \item Size constraint: $|V^{*}|=s$;
    \item Cohesiveness maximisation: $G_{P}^{\prime}$ is a triangle-connected ($k$, $\mP$)-truss and has the maximum $k$ compared to any other $G_{P}^{\prime\prime}$ meeting the above two conditions.
\end{enumerate}

An example has been provided in the introduction. And the notations that are frequently used are summarised in Table~\ref{notion}. Next, we analyse the hardness.

% \begin{example}
%     \label{exp:problem}
%     Take Fig. \ref{fig:expofhin} as an example. Given the query $q = a_6$, size constraint $s = 7$, and meta-path $\mP = (A,P,A)$, the set $\{ a_4, a_5, a_6, a_7$, $a_8, a_{10}, a_{11}\}$ constitutes a feasible solution with $k = 3$, while the set $\{a_1, a_2, a_3, a_4, a_5, a_6, a_7\}$ is the optimal solution with $k = 4$.
% \end{example}

\setcounter{theorem}{0}
\begin{theorem}
    SCSH problem is NP-hard.
\end{theorem}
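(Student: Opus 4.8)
The plan is to reduce a well-known NP-hard problem to SCSH. The most natural candidate is the decision version of the \textbf{maximum clique} problem (or equivalently, deciding whether a graph contains a clique of size $s$), since a clique of size $s$ is exactly the extremal structure for truss-based cohesiveness: a size-$s$ clique has triangle-connected trussness $s$, and no size-$s$ vertex set can do better. Concretely, I would take an instance $(G, s)$ of the clique problem with $G$ an ordinary (homogeneous) graph and construct an HIN $H$ and a meta-path $\mP$ such that the derived graph $G_{P}$ is isomorphic to $G$ (possibly with a small gadget attached so that a fixed query node $q$ can be forced into the optimal community). The standard trick is to use a bipartite-style HIN: create a ``connector'' node type and a meta-path of length two, e.g. $\mP = (A, B, A)$, so that placing a $B$-node adjacent to exactly the two endpoints of each edge of $G$ makes two $A$-nodes a $\mP$-pair iff the corresponding vertices are adjacent in $G$. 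Then each $A$-node corresponds to a vertex of $G$ and $G_{P} \cong G$.

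The key steps, in order, would be: (1) describe the construction of $H$ from $(G,s)$ in polynomial time, introducing the $B$-nodes (one per edge of $G$) and, if needed, an auxiliary clique gadget together with a designated query node $q$ so that query-awareness does not interfere with the reduction; (2) prove the forward direction --- if $G$ has a clique $K$ of size $s$, then the corresponding $A$-nodes induce a $\mP$-pair set that is a triangle-connected $(s,\mP)$-truss of size $s$ containing $q$, hence the SCSH optimum on $H$ is $\ge s$; (3) prove the backward direction --- if some size-$s$ vertex set $V^{*}$ in $G_{P}$ yields triangle-connected trussness $k \ge$ (the value attained by a clique), then every $\mP$-pair in the induced community has support at least $s-2$ among only $s-2$ other common nodes, which forces every pair among the $s$ nodes to be a $\mP$-pair, i.e.\ $V^{*}$ is a clique in $G$; (4) conclude that solving SCSH decides the clique instance, so SCSH is NP-hard. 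I would phrase (2) and (3) as a tight equivalence: $G$ has an $s$-clique iff the SCSH optimum equals $s$.

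The main obstacle I anticipate is \emph{not} the truss/clique correspondence, which is essentially the same argument used by \cite{zhang2023size} for homogeneous $k$-truss, but rather handling the two extra features of the HIN setting cleanly: first, ensuring the reduction produces a genuine HIN whose $G_{P}$ is exactly $G$ (no spurious $\mP$-pairs created by the meta-path instances --- this needs care so that, e.g., two edge-gadget $B$-nodes don't accidentally create an unwanted path, which is why a length-2 symmetric meta-path and distinct $B$-nodes per edge are convenient); and second, dealing with the mandatory query node $q$ and the exact size constraint $|V^{*}| = s$ simultaneously --- one must make sure $q$ can always be extended to an optimal solution, typically by attaching $q$ to a small clique or by a padding argument, without changing the answer to the clique instance. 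A secondary subtlety is the triangle-connectivity requirement: I would note that a clique's $\mP$-pair set is trivially triangle-connected (any two triangles sharing an edge, and the clique's triangle graph is connected for $s \ge 3$), and handle small $s$ (say $s \le 2$) as a trivial base case, so triangle-connectivity does not weaken the reduction.
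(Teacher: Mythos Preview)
Your plan is essentially the paper's proof: reduce from $m$-clique via a two-type HIN with meta-path $(T_1,T_2,T_1)$ and one $T_2$-node per edge of $G$, so that $G_P$ reproduces $G$. The paper resolves the query-node obstacle you flag by making $q$ a fresh $T_1$-node adjacent to \emph{every} $T_2$-node (hence universal in $G_P$) and setting $s=k=m{+}1$, so an $m$-clique in $G$ corresponds exactly to an $(m{+}1)$-clique in $G_P$ containing $q$; note that your alternative of ``attaching $q$ to a small clique'' would not suffice on its own, since the target clique in $G$ need not intersect that gadget.
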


\begin{proof}

We prove that the decision version of the SCSH problem is NP-complete via a reduction from the $m$-clique problem, which is known to be NP-complete.

The $m$-clique problem takes a graph $G=(V, E)$ and an integer $m$ as input, and asks whether $G$ contains a clique of size at least $m$. The decision version of the SCSH problem changes the optimisation objective (maximising the trussness) to a constraint, i.e., $C$ is a triangle-connected $(k, \mP)$-truss for a given $k$ (other constraints are identical to the SCSH problem). 

We first present a gadget that can transform $G$ into an instance of HIN ($H$) with two types of nodes ($T_1$ and $T_2$) and a network schema consisting of just one undirected edge $(T_1, T_2)$. The gadget is as follows. For each $e_i(u,v)$ in $E(G)$ ($1\le i\le |E(G)|$), we create node $u$, $v$ with type $T_1$ in $H$ if $u$ and $v$ are not in $H$ yet (we do not create an edge between $u$ and $v$ in $H$). Then, we create a new dummy node $w_i$ with type $T_2$ for $e_i$. After, we create two undirected edges $(u, w_i)$ and $(v, w_i)$ in $H$. At last, we create an extra dummy node $q$ of type $T_1$ and link it to each created $w_i$. Clearly, applying the gadget takes linear time w.r.t. $|E(G)|$. 

By applying the gadget, the instance of our problem is as follows. $H$ is generated by the gadget, $\mP$ is $(T_1$, $T_2$, $T_1)$, $s$ and $k$ are set to be $m+1$ (notice that a size $s$ $s$-truss is a $s$-clique), and $q$ is query node.

It is easy to see that $\mP$-pairs form a graph $G_{P}$ that contains the identical structure as $G$ in the $m$-clique problem, with an extra node $q$ connecting to every node in $G$. This means that each size $m$-clique in $G$ corresponds to an $m+1$-clique in $G_{P}$. In reverse, every $m+1$-clique in $G_{P}$ (containing $q$) leads to a $m$-clique in $G$ by removing $q$. 
It is clear to see that computing $G_{P}$ for the given $H$ takes $p$-time. 

As such, we have shown that the above instance of the decision version of our studied problem is at least as hard as the $m$-clique problem, i.e., the decision version of our problem is NP-complete, which means that the SCSH problem is NP-hard.   

\end{proof}

\section{State-of-the-art framework}

Size-bounded community search maximising cohesiveness has been explored extensively in homogeneous networks, particularly for the $k$-core and $k$-truss models. 
We first retrace the methodological development of existing approaches and then discuss the rationales for the algorithms proposed for our problem.

\noindent\textbf{SOTA framework~\cite{yao2021efficient,zhang2023size}}. Interestingly, SOTAs use include-exclude backing tracking to systematically explore node sets that potentially lead to the desired result, as shown in Algorithm~\ref{alg:sota}, which is decorated with upper/lower bound-based pruning, candidate reductions, and branching strategies.

\noindent\textit{\underline{Upper/lower bound-based bounding}}. 
Whenever Algorithm~\ref{alg:sota} enters a new recursion state, it attempts to stop exploring non-promising recursion states by estimating an upper bound on $(C,$  $R)$, i.e., the highest cohesiveness, and if the upper bound is no better than the lower bound ($\tilde{k}$), the current recursion state and any other recursion states induced by the current recursion state cannot lead to a result with cohesiveness greater than $\tilde{k}$. 
Notice that the upper bound is based on both $C$ and $R$, which secures the correctness of the bounding (we will revisit the detail later).   
As such, Algorithm~\ref{alg:sota} only explores recursion states that cannot be bounded.

\noindent\textit{\underline{Heuristic during search}}. SOTAs propose corresponding scores for nodes in the candidate set $R$, which ensures that nodes in $R$ contribute less to the cohesiveness of $C$ as recursion goes deep (line~\ref{a1:score}). This enables the search to produce a result with high cohesiveness at early search stages, which enhances the upper/lower bound-based bounding.

\noindent\textit{\underline{Eager candidate reductions}}. SOTAs also incorporate candidate reductions based on the coreness (or trussness), distance, etc. 
The reduction is applied whenever Algorithm~\ref{alg:sota} enters a new recursion state (line~\ref{a1:red}).

\noindent\textit{\underline{Branching strategy}}. SOTAs propose dominance-based branching rules. 
The key idea is that, given two nodes $u$ and $v$, if $u$ dominates $v$, then a higher cohesiveness result must be induced by a recursion state including both $u$ and $v$ or a recursion state excluding $v$. 
This can be extended to when $u$ dominates multiple nodes (lines \ref{a1:dos} to \ref{a1:doe}).   
%For both $k$-core and $k$-truss-based methods, existing works utilise the classic branch and bound (B\&B) framework to recursively explore node sets that potentially lead to the desired result. 

%During the B\&B process, a global lower bound (the current best result) is used to bound unpromising branches, i.e, when the B\&B process enters a recursion state, an upper bound based on the current state is applied to estimate the best result that can be achieved from the current state, and if the upper bound is lower than the global lower bound, the B\&B process can safely stop here and backtrack to explore other promising states. Clearly, as long as the upper bound is admissible, the correctness is secured.

\begin{algorithm}[t]    
    \tcc{Initially $C=\{q\}$, $R$ contains nodes reachable from $q$, $\tilde{k}$ is a heuristic result as a lower bound}
    %\tc{test}
    Reduce $(C, R)$ \label{a1:red}\;
    \If{$|C|=s$}{
        $k\gets $ cohesiveness of $C$; \tcp{trussness or coreness}
        \lIf{$k>\tilde{k}$}{
            $\tilde{k}\gets k$,
            $optC\gets C$
        }
    }
    \If{$|C|<s$ $\&\&$ $R\ne \emptyset$ $\&\&$ $UB(C,R)> \tilde{k}$}{
        $v\gets$ pop a node with the highest score from $R$ \label{a1:score} \;
        $\Phi$ $\gets$ compute node dominated by $v$ \label{a1:dos}\;
        %$X\gets \{v\}$\;
        \ForEach{$u\in \Phi$}{
            $X$ $\gets$ $X\cup\{u\}$; \tcp{$X$ is $\emptyset$ initially}
            B\&B($C\cup \{v,u\}$, $R$ $\setminus X$) \label{a1:doe}\;
        }
        B\&B($C\cup\{v\},R\setminus  X$)\;
        B\&B($C,R\setminus X$)\;
    }
   % \textbf{return} $optC$\;
\caption{B\&B($C$,$R$)}
\label{alg:sota}
\end{algorithm}

\noindent\textbf{Opportunities}. Based on the above discussions, we identify the following opportunities, bearing SOTAs in mind.  

First, although STOAs have made great efforts to prune unpromising recurrences, they only explore homogeneous graph properties. We explore HIN properties, which lead to an effective heuristic algorithm that can quickly find a near-optimal result. 
This significantly enhances pruning effectiveness and reduces overall runtime.  

Second, we introduce the $k$-combination recursion-based B\&B framework (\textsf{kcB\&B}) for solving the size constraint community search maximising cohesiveness. 
Compared to the include–exclude based framework runs in $\mathcal{O}^{*}(2^{n})$ regardless the size constraint $s$, \textsf{kcB\&B} naturally runs in $\mathcal{O}^{*}(C_{n}^{s-1})$. 
Enhanced by other techniques proposed by us, $\mathcal{O}^{*}(C_{n}^{s-1})$ can be further reduced to $\mathcal{O}^{*}$ $(n^{\prime} C_{n}^{s-2})$, where $n^{\prime}$ can be much less than $n$.   

Last but not least, compared to the $k$-core upper bound proposed in \cite{yao2021efficient}, SOTA for $k$-truss pays less attention to the upper bound while focusing more on the lower bound, leaving room for improvement. 

In the next section, we propose our algorithms with novel optimisations based on the aforementioned opportunities. 
\section{Our Methods}
In this section, we introduce the lexicographic generation B\&B framework first, which serves as the backbone of our proposed methods. 
Then, we discuss our main optimisations, including novel heuristics, upper bounds and total search order in great detail. 
Then, we concisely discuss and adapt several existing reduction rules to our framework. 

\subsection{Lexicographic generation framework}\label{sec:frame}
\begin{algorithm}[t]

    $C\gets \{q\}$; $C^{*}$ $\gets$ the best heuristic result;\tcp{Section~\ref{sec:lb}}
    $R \gets$ target nodes reachable from $q$ via $\mP$-pairs\;
    reduce $R$ according to $\tau_{\triangle}(C^{*})$ and order the remaining $R$;\tcp{Section~\ref{sec:reduction}}
    \ForEach{$v\in R$}{ 
        \tcp{Section~\ref{sec:order}}
        \If{stop conditions cannot be satisfied }{
            \textsf{kcB\&B}$(C\cup\{v\},R\setminus\{v\})$\; 
            $R$ $\gets$ $R\setminus\{v\}$\;
        }
    }
    \Return $C^{*}$\;
    
    \BlankLine 
    \SetKwFunction{FMain}{kcB\&B} 
    \SetKwProg{Fn}{Function}{:}{\KwRet} \Fn{\FMain{$C,R$}}{ 
        \If{$|C|==s$}{
            %$k$ $\gets$ the trussness of $C$; \tcp{edges could be removed}
            \lIf{$\tau_{\triangle}(C)>\tau_{\triangle}(C^{*})$}{
                %$\tilde{k}=k$\;
                $C^{*}$ $\gets$ $C$
            }
        }
        \tcp{Section~\ref{sec:bound}}
        \If{$|C|<s$ $\&\&$ $|R|\ge s-|C|$ $\&\&$ $UB(C,R)>\tau_{\triangle}(C^{*})$}{
        \tcp{Section~\ref{sec:branch}}
            $v$ $\gets$ the most dominating node in $R$\; 
            $\Phi$ $\gets$ nodes in $R$ dominated by $v$\; 
            \ForEach{$u \in R\setminus \Phi$}{
                %$C_V'\gets C_V\cup \{v\}$\;
                %$R_V'\gets R_V\setminus \{v\}$\;
                \textsf{kcB\&B}$(C\cup \{u\},R\setminus\{u\})$; $R$ $\gets$ $R\setminus \{u\}$\;
            }
        }
    }
\caption{The new framework for SCSH}
\label{alg:nf}
\end{algorithm}

Due to the NP-hardness of our studied problem, no polynomial algorithm exists for solving it.  
As shown in SOTAs, we have to systematically try combinations of target nodes of size $s$ and see if the target nodes induced $\mP$-pairs (target nodes induced subgraph in $G_{P}$) can form a triangle-connected $(k,\mP)$-truss with node size $s$ while maximising the cohesiveness (trussness). 

We find that the lexicographic generation~\cite{Knuth2011Art} is the most efficient way to generate $s$-node sets from $V(G_{P})$, especially when $s$ is small, which nicely fits our application scenario. 
The reason is that lexicographic generation explores a generation tree with depth up to $s$ to generate all $s$-node sets from $|V(G_{P})|$. 
In contrast, the inclusion-and-exclusion-based method explores a generation tree with depth up to $|V(G_{P})|$. Due to the differences in maximum recursion depth, the lexicographic generation naturally exhibits better scalability.  

To explain the algorithm clearly, we first introduce a concept.

\begin{definition}\label{def:tSet}\textbf{Triangle connected trussness of a target node set}.
Given a target node set $C$, the triangle connected trussness of $C$ is the maximum $k$ such that there exists a triangle connected $(k, \mP)$-truss in $C$ induced subgraph of $G_{P}$, i.e., $G_{P}[C]$, where the triangle connected $(k, \mP)$-truss contains every node in $C$, denoted by $\tau_{\triangle}(C)$. 
\end{definition}

Below, we call `triangle connected trussness' as `trussness' for brevity. Besides, under the context of $G_{P}$, trussness in $G_{P}$ is the same as trussness of $\mP$-pairs.   

Algorithm~\ref{alg:nf} shows our new search framework based on the lexicographic generation. The recursive function \textsf{kcB\&B} serves as the backbone, systematically exploring $s$-node sets. 
Algorithm~\ref{alg:nf} first obtains a heuristic result that serves as an initial (trussness) lower bound (line 1). 
It then initialises the result set $C$ by including the query node $q$ and collects the complete candidates ($R$), i.e., target nodes that are $\mP$-pair reachable from $q$. 
It leaves room for ordering optimisations, which can apply novel techniques that can lead to potential early terminations (line 5). 
Then, \textsf{kcB\&B} is called to explore $s$ node sets from $R$, which only contains target nodes potentially leading to better results (line 3).  
When \textsf{kcB\&B} enters a recursion state with $(C,R)$ having $|C|$ less than $s$, it attempts to stop and backtrack to other more promising recursion states by evaluating two bounding rules (line 12). The first size-based bounding ($|R|\ge s-|C|$) is trivial, while the latter, i.e., deriving an upper bound based on both $C$ and $R$ and using it to compare to the best result up to the time, will be discussed later. 
When \textsf{kcB\&B} reaches to $s$-node set (line 10), it derives $\tau_{\triangle}(C)$ and updates $C^{*}$ if necessary. 

The completeness of Algorithm~\ref{alg:nf} is clear since it is based on the lexicographic generation. Suppose all the reductions, bounding, branching rules and early terminations are correct (which we will discuss and prove in great detail), $C^{*}$ is optimum when Algorithm~\ref{alg:nf} terminates. Ignoring optimisations to be discussed, Algorithm~\ref{alg:nf} runs in $\mathcal{O}^{*}(\binom{|V(G)|}{s-1})$.

\noindent\textbf{Edge sets generation (ESG) based approach}. The backbone of Algorithm~\ref{alg:nf} is to try node sets of size $s$, i.e., node sets generation (NSG), which is straightforward by the definition of our problem. 
However, as shown in Definition~\ref{def:tSet}, the result is not a node-induced subgraph of $G_{P}$. 
This indicates an alternative to Algorithm~\ref{alg:nf}, i.e., we try edge sets till the node size reaches $s$. 
In fact, this approach is more natural for truss-based cohesiveness since trussness is defined on edges, and the result is an edge-induced subgraph. 
We also attempt this approach and discover several interesting experimental results. Since this approach shares great similarities with Algorithm~\ref{alg:nf}, we omit its details to avoid duplication. 

Below, we start to discuss detailed optimisations. Unless explicitly discussed, optimisations apply to both Algorithm~\ref{alg:nf} (NSG) and ESG, with NSG being the default context.

%In this section, we introduce the k-combination B\&B framework first.
\subsection{Lower bound initialization}\label{sec:lb}
%\section{A Heuristic Approach}\label{sec3}
One of the critical speedups for Algorithm~\ref{alg:nf} is to have a near-optimal result, i.e., a tight initial (trussness) lower bound. This is because a near-optimal result can be utilised to reduce the candidate set and bound the search. In this subsection, we utilise HINs to propose a polynomial algorithm that effectively initialises a tight lower bound.  

\noindent\textbf{Rationale}. We can directly design heuristics and greedy algorithms to generate $s$ node sets by including nodes with high trussness and see if we can get a quality result. The challenge is how we can evaluate the quality. As such, we first discuss the optimality of the lower bound for our problem. We then demonstrate that the quality lower bounds can be derived efficiently by leveraging the properties of HINs. 

%After that, we discuss how to further refine the heuristic result.            

\begin{algorithm}[t]
    \caption{SCSHHeu($H$,$\mP$,$q$,$s$)}
    \label{alg:heu}
    \small
    %$C\gets$ the largest $\mP$-star in $I(q)$\;
    $\mathcal{S}$ $\gets$ the set of seed cliques;\tcp{$\mP$-stars containing $q$}
    \eIf{$|\argmax_S\{|S||S\in \mathcal{S}\}|\ge s$}{
        $\tilde{C}\gets$ any $s$ node set of $S$\ that contains $q$;
    }
    {
        $\tilde{C}$ $\gets$ $\emptyset$\;
        %$\tilde{k} \gets 0$\;
        %enlarge
        sort $\mathcal{S}$ in non-increasing order according to $|S|$ ($S\in \mathcal{S}$)\;
        \tcp{We can terminate the loop after a few attempts}
        \While{$\mathcal{S}\ne \emptyset$}{
                $C^{\prime}$ $\gets$  $\mathcal{S}$.pop()\; $\mathcal{S}^{\prime}$ $\gets$ $\mathcal{S}$ $\cup$ $\mP$-stars not containing $q$ but $\frac{\mP}{2}$ reachable from $q$\; 
                
                \While{$\mathcal{S}^{\prime}\ne \emptyset$}{
                    $C^{\prime\prime}$ $\gets$ $\argmax_{S}\{|E(G_{P}[S\cup C^{\prime}])||S\in \mathcal{S}^{\prime} \}$; $\mathcal{S}^{\prime}$ $\gets$ $\mathcal{S}^{\prime}\setminus \{C^{\prime\prime}\}$;
                    $C^{\prime}$ $\gets$ $C^{\prime} \cup C^{\prime\prime}$\;
                     \If{$|C^{\prime}|\ge s$}{
                        reduce $C^{\prime}$ to size $s$ by progressively removing edges with the lowest trussness\;
                        \lIf{$q\in C^{\prime}$ \&\& $\tau_{\triangle}(\tilde{C})$ $<$ $\tau_{\triangle}     (C^{\prime})$}{
                            $\tilde{C}$ $\gets$ $C^{\prime}$
                        }
                     }
                    
                }
        
        }
    }
    \Return $\tilde{C}$\;
\end{algorithm}

\noindent\textbf{Optimum lower bound}. 
The clique of $G_{P}$ containing $q$ provides a lower bound for the trussness of $G_{P}$, i.e., if there exists a $k$-clique containing $q$ then there exists a triangle-connected truss containing $q$ with trussness at least $k$. The optimum lower bound (denoted by $\underline{k}$) by considering the size constraint is shown as follows. 

%\begin{equation}\label{eq:lb}
\begin{align}\label{eq:lb}
     \underline{k} = s & \quad \text{if $\omega_{q}(G_{P})\ge s$},
\end{align}
where $\omega_{q}(G_{P})$ denotes the maximum clique size in $G_{P}$ containing $q$.

Based on the above definition, to get a good lower bound, we want to find large cliques containing $q$ in $G_{P}$. 
However, this could be hard, especially if we treat $G_{P}$ as a homogeneous graph and ignore the fact that $G_{P}$ is derived via instances of meta-path. 
To show why finding large-sized cliques in $G_{P}$ is easy, we show an existing definition below.

\begin{definition}
    \textbf{$\mP$-star}\cite{yang2020effective}: Given an HIN $H(V,E)$ and a symmetrical meta-path $\mP=(A_1, A_2, ..., A_{l+1})$, a set of nodes with type $A_1$ is a $\mP$-star if there is a node $a_{\frac{l+1}{2}}$ with type $A_{\frac{l+1}{2}}$ s.t. for each $u$ in this set, $a_{\frac{l+1}{2}}$ is a $\frac{\mP}{2}$-neighbor of $u$, i.e., they are $\frac{\mP}{2}$ reachable.
    \label{def:pstar}
\end{definition}

\begin{obv}\label{ob:star}
    Every $\mP$-star is a clique in $G_{P}$.
\end{obv}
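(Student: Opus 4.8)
The statement to prove is Observation~\ref{ob:star}: every $\mP$-star is a clique in $G_{P}$.

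\textbf{Proof plan.} The plan is to argue directly from the definitions of $\mP$-star, $\mP$-pair, and $G_{P}$, exploiting the symmetry of the meta-path $\mP$. Let $U$ be a $\mP$-star, so by Definition~\ref{def:pstar} there is a ``center'' node $a_{(l+1)/2}$ of type $A_{(l+1)/2}$ such that every $u \in U$ is $\frac{\mP}{2}$-reachable from $a_{(l+1)/2}$, where $\frac{\mP}{2}$ denotes the first half of the meta-path, $(A_1, A_2, \dots, A_{(l+1)/2})$. To show $U$ induces a clique in $G_{P}$, I need to show that for every pair of distinct nodes $u, v \in U$, the pair $(u,v)$ is a $\mP$-pair, i.e., there is an instance of $\mP$ in $H$ whose endpoints are $u$ and $v$.

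The key step is the concatenation of half-path instances. First I would make precise that $u$ being $\frac{\mP}{2}$-reachable from the center means there is a path instance $p_u = u \to \cdots \to a_{(l+1)/2}$ realizing the node- and edge-type sequence of $\frac{\mP}{2}$; similarly there is $p_v = v \to \cdots \to a_{(l+1)/2}$. Since $\mP$ is symmetric ($\mP^{-1} = \mP$), reversing $p_v$ yields an instance of the second half of $\mP$, namely $a_{(l+1)/2} \to \cdots \to v$ realizing the type sequence $(A_{(l+1)/2}, A_{(l+1)/2 + 1}, \dots, A_{l+1})$. Concatenating $p_u$ with the reversed $p_v$ at the shared center node $a_{(l+1)/2}$ produces a walk $u \to \cdots \to a_{(l+1)/2} \to \cdots \to v$ whose length is $l$ and whose node/edge types match $\mP$ exactly; hence it is an instance of $\mP$, so $(u,v)$ is a $\mP$-pair and thus an edge of $G_{P}$. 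Since $u,v$ were arbitrary, every two nodes of $U$ are adjacent in $G_{P}$, so $U$ is a clique.

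\textbf{Anticipated obstacle.} The main subtlety is whether the concatenated walk is a legitimate ``instance'' of $\mP$ — in particular, Definition~\ref{def:Instancedef} speaks of a path $a_1 \to a_2 \to \cdots \to a_{l+1}$, and one must check that the definition (as used in this paper, following~\cite{yang2020effective}) does not require all intermediate nodes to be distinct, or that if it does, the notion of $\mP$-pair is still witnessed because repeated nodes are permitted in defining reachability (this is standard for $b$-triangles / $\mP$-pairs in the HIN community-search literature). I would note this explicitly: the $\mP$-pair relation only asks for a path instance matching the type pattern, with no injectivity requirement on internal vertices, so the concatenation argument goes through. A secondary minor point is handling the parity: $\mP$ symmetric forces $l$ even so that $\frac{l+1}{2}$ and the half-path $\frac{\mP}{2}$ are well-defined as in Definition~\ref{def:pstar}; this is already assumed in the paper (all meta-paths are symmetric), so no extra work is needed. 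Overall this is a short, definitional argument, and the ``hard part'' is merely stating the concatenation carefully enough that the type-matching is transparent.
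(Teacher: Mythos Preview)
Your argument is correct and is exactly the natural proof: concatenate the two $\frac{\mP}{2}$-instances at the shared center node and use the symmetry of $\mP$ to see that the resulting walk realizes $\mP$, so any two members of the $\mP$-star form a $\mP$-pair. The paper itself states this observation without any proof, treating it as immediate from Definition~\ref{def:pstar}; your explicit concatenation argument (including the remark that instances need not have distinct internal nodes) simply spells out what the paper takes for granted.
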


Based on \textsc{Observation}~\ref{ob:star}, we can quickly get a set of cliques containing $q$ by simply applying breadth first search on $G$ (HIN) with type $A_{\frac{l+1}{2}}$ nodes that are  $\frac{\mP}{2}$ reachable from $q$. We treat these $\mP$-stars as \textit{seed cliques}. 

The seed cliques could be as large as $s$, as shown in Equation~\ref{eq:lb}, which leads to an optimum result directly, and Algorithm~\ref{alg:nf} does not need to invoke recursive calls. 
However, the seed cliques could be small. 
This motivates us to derive a tighter lower bound based on seed cliques. 

%TODO: This can be improved Done
\noindent\textbf{Clique-based lower bound}. We observe that if we union multiple seed cliques with more edges in their induced subgraph of $G_{P}$, the chance of forming high trussness subgraphs is high. Based on this rationale, the complete heuristic algorithm is shown in Algorithm~\ref{alg:heu}. 

If there is no seed clique ($\mP$-star) containing $q$ with size no less than $s$, Algorithm~\ref{alg:heu} initialises $\tilde{C}$ as $\emptyset$ since we do not have any feasible lower bound yet.  
Then, Algorithm~\ref{alg:heu} gives large seed cliques opportunities. For each large seed clique $C^{\prime}$ (line 8), it will be progressively enlarged by greedily merging other seed cliques $C^{\prime\prime}$ that bring the most edges w.r.t. $C^{\prime}$ up to the time. 
After a union, if $C^{\prime}$ is larger than $s$, Algorithm~\ref{alg:heu} reduces it to fit the size and then replaces it with the best lower bound if necessary. 
Notice that Algorithm~\ref{alg:heu} is shown as giving each seed clique a chance; in fact, we can terminate it after a fixed number of rounds. Algorithm~\ref{alg:heu} returns $\tilde{C}$ with $\tau_{\triangle}(\tilde{C})$ as output.  

\noindent\textit{\underline{Computational cost}}. 
The time complexity is $O(c \cdot |E(G_{P}[C'])|^{1.5})$.  
The dominating parts are the two \textit{while} loops together, having a time complexity of $O(|\mathcal{S}|\cdot|\mathcal{S}'||E(G_{P}[C'])|^{1.5})$ (notice we can set a constant $c$ and terminate this nested loop much earlier). The innermost loop is dominated by the cost of computing edge trussness and iteratively deleting edges, which takes $O(|E(G_{P}[C'])|^{1.5})$.

\begin{example}
    As shown in Fig.~\ref{fig:expofhin}, consider $q = a_6$, $\mP = (A, P, A)$, and $s=7$. Let $\mathcal{S} = \{p_2,$ $p_3,$ $p_4,$ $p_7\}$. 
    We denote $\mP$-stars using the centre nodes of the instances of $\mP$. 
    The largest $\mP$-star is $\{a_6, a_8, a_{10}, a_{11}\}$, which is smaller than $s$. Therefore, Algorithm~\ref{alg:heu} proceeds to the ``else'' branch.
    
    After sorting, $\mathcal{S} = \{p_3, p_2, p_7, p_4\}$. 
    We first select $p_3$ as the initial seed clique, i.e., $C' = \{a_4, a_5, a_6, a_7\}$ and update $\mathcal{S}' = \{p_2, p_7, p_4, p_1, p_5,p_6\}$. 
    We then merge $p_1$ into $C'$ since $p_1$ contributes the largest number of edges. 
    As the resulting size is $|C'| = 7$, which satisfies the size constraint, no size reduction is needed. This yields the final result $\{a_1, a_2, a_3, a_4, a_5, a_6, a_7\}$ with trussness of $4$, which is the best result returned by Algorithm~\ref{alg:heu}.
    
%    In another case, let $\mP = (APTPA)$, $s = 7$, and $q = a_6$. Here, $\mathcal{S} = \{t_1, t_2\}$, and the largest seed clique is $t_1$, with size 7. Algorithm~\ref{alg:heu} returns $\{a_1, a_2, a_3, a_4, a_5, a_6, a_7\}$, which has triangle-connected trussness 7.
\end{example}

\subsection{Bounding techniques}\label{sec:bound}

%TODO: heuristic during the search
% \noindent
% \textbf{Search Order.} The search order is designed with the following priorities. First, $P$-pairs whose endpoints are both contained within $V(C_P)$ are preferred. Second, when introducing a new node, meaning that one endpoint $u$ of a $P$-pair lies in $V(C_P)$ while the other endpoint $v$ does not, priority is given to $P$-pairs for which $v$ forms more triangles with $C_P$, as this suggests a higher likelihood of enhancing the trussness of $C_P$. Third, the trussness of the $P$-pair itself is considered; $P$-pairs with higher trussness are more likely to further strengthen the trussness of $C_P$. Therefore, during the enumeration of $P$-pairs in $R_P$ (i.e., line 17 in Algorithm~\ref{alg:SCSHEE}), $P$-pairs are explored in the order of their scores computed by the following formula.

% $$
% Score(u,v) = 
% \begin{cases}
% 2+|C_P|+\frac{\tau_{C_P\cup R_P}(u,v)}{1+k_{max}}, & if\quad |(u,v) \cap V(C_P)|=2 \\
% 1+|T|+\frac{\tau_{C_P\cup R_P}(u,v)}{1+k_{max}},  & if\quad |(u,v) \cap V(C_P)|=1
% \end{cases}
% $$
% where $|T|$ denotes the number of triangles formed between the endpoint of $(u,v)$ that is not in $V(C_P)$ and the $P$-pairs in $C_P$ and $k_{max}$ is $\max_{(u,v)\in C_P\cup R_P}\{\tau_{C_P\cup R_P}(u,v)\}$.

%\subsection{Pruning Rules of $P$-pair Enumeration}
%\label{sec4.2}

In this subsection, we present how to utilise a known result (the heuristic result discussed previously), i.e., a (trussness) lower bound, to bound the search tree by comparing it to an upper bound. 

\noindent\textbf{Rationale}. Given a non-trivial recursion state with $C$ and $R$ ($|C|<s$ and $|R|\ge s-|C|$), we want to optimistically estimate the best trussness after moving arbitrary $s-|C|$ nodes from $R$ to $C$, i.e., an upper bound for this recursion state, denoted by $UB(C, R)$. 
Clearly, as long as $UB(C, R)$ is admissible, that is, 1) $UB(C, R)$ is no less than any results\footnote{in terms of trussness} induced by the current recursion state, and 2) $UB(C, R)$ is no greater than the current lower bound, we can stop branching at the current recursion state. Algorithm~\ref{alg:nf} can explore (backtrack to) other recursion states that cannot be bounded. 

As discussed, SOTAs all involve the idea above. In~\cite{yao2021efficient}, the main idea is as follows. It is assumed that 1) $s$-$|C|$ nodes from $R$ that are mostly connected to $C$ would be added to $C$, 2) these nodes all exhibit high coreness, and 3) all the edges incident from these nodes would contribute to nodes in $C$ with the minimum coreness. The process of deriving the upper bound is dynamic, i.e., when all the nodes in $C$ with minimum coreness have been increased by 1, the remaining edges will be added to the minimum coreness nodes until all incident edges are consumed. After that, the updated minimum coreness among all nodes in $C$ serves as the upper bound. 

On the other hand, STOA for truss-based model~\cite{zhang2023size} is mainly based on the current trussness of $C$ and the number of nodes in $R$. It is based on the fact that when moving a node from $R$ to $C$, the trussness of $C$ will be increased by at most $1$. Then, given the trussness of $C$, \cite{zhang2023size} computes how many nodes are needed, and if $R$ cannot provide sufficient nodes, the branch can be bounded. The admissibility of this bound lies in the fact that not every node can increase the trussness.

Clearly, the concept presented in \cite{yao2021efficient} is more refined, as it considers necessary structural conditions based on the cohesive subgraph model. If it can be adapted to our problem, we could attain a much tighter bound compared to \cite{zhang2023size}.

We shall present two non-trivial upper bounds—one for ESG and one for NSG. 
As mentioned, ESG is more natural when deriving the upper bound using the idea of \cite{yao2021efficient}. Our contributions for designing ESG upper bounds are twofold: 1) to address the challenges induced by the cohesiveness difference, and 2) to address the inconsistency between edge inclusion and node inclusion (i.e., adding an edge may not increase the size of $C$). 

Then, we propose a novel upper bound for NSG, which is more challenging since the results of our model are edge-induced graphs, i.e., when NSG generates a result $C$ (a node set), it is not necessarily true that all edges in $G_{P}[C]$ are in the result.

\subsubsection{Upper bounds for ESG} We consider a non-trivial recursion state with $C$ and $R$, 
where both $C$ and $R$ contain edges in $G_{P}$, 
and we use $V(C)$ to denote the nodes contained in $C$. 

\noindent\textbf{ESG upper bound computation}. The ESG upper bound is based on the accurate trussness for edges in $C$ and on how $R$ can increase the trussness of $C$ optimistically. 

The trussness for each edge in $C$ can be progressively updated during the search process and can be treated as known information. 
We can partition edges according to their trussness. 
Our proposed upper bound starts with the minimum trussness edges in $C$ denoted by $E_{min}$ with trussness of $\tau_{\triangle}(C)$, where $\tau_{\triangle}(C)$ is trussness of $C$.

%\footnote{For edge-induced subgraph $\tau_{\triangle}(C)$ denotes its triangle connected trussness.}. %The starting value of the estimated trussness is the support of an edge in $E_{min}$ minus 2. 

To establish the optimistic while tight upper bounds, we assume edges in $R$ which can contribute to $R$ the most \textit{capable triangles} (c-triangle), i.e., triangles consisting of edges with trussness no less than $\tau_{\triangle}(C)$, will be added to $C$. We also assume that all these capable triangles will contribute to $E_{min}$ up to the time. Additionally, since including an edge may or may not introduce a new node, the optimistic number of c-triangles is derived as follows.   

%To establish optimistic yet tight upper bounds, we assume that edges in set \( R \) which can contribute the most \textit{capable triangles} (c-triangles)—i.e., triangles formed by edges with a trussness of at least \( \tau_{\triangle}(C) \)—will be added to set \( C \). We also assume that all these capable triangles will contribute to \( E_{\text{min}} \) over time. This is based on the observation that including an edge may or may not lead to the inclusion of a new node. Therefore, the optimistic count of c-triangles is derived as follows.

\noindent\textit{\underline{Edges with no extra nodes}}. For all these edges that form c-triangles with $C$, newly induced c-triangles are assumed to contribute to the trussness of $E_{min}$ up to the time. 

\noindent\textit{\underline{Edges with extra nodes}}. For edges with extra nodes, for each of such extra node $u$, we compute its score in terms of the number of c-triangles that $u$ contributes to $C$ (at least one of the nodes of such c-triangle is in $V(C)$) in the $C\cup R$ induced subgraph of $G_{P}$. For the top $s-|V(C)|$ of the nodes with the highest score, they are assumed to move to $C$ and contribute to the support of $E_{min}$ up to the time.

For these optimistically estimated c-triangles assumed to be added to $C$, each edge in $E_{min}$ will consume one of the c-triangles up to the time. 
When the support of every edge in $E_{min}$ has been increased by 1, the estimated upper bound is increased by 1. 
Then $E_{min}$ is updated by including edges with minimum trussness up to the time. 
We repeat this process until all the optimistically estimated c-triangles have been consumed.  

\noindent\textit{\underline{Admissibility}}. The above process naturally follows the trussness definition to optimistically estimate the trussness of enlarged $C$. In fact, 1) the number of c-triangles assumed to be added is an overestimation, 2) not all the c-triangles happen to contribute to the edge with the minimum trussness, and 3) after enlarging $C$, $C$ contains more edges while these edges are assumed to have high trussness. 

\noindent\textit{\underline{Computational cost}}. 
The dominant cost of computing the upper bound is the number of triangles in $C\cup R$ induced subgraph of $G_{P}$ (edge induced subgraph), which is bounded by $\mathcal{O}(|C\cup R|^{1.5})$. 
%Firstly, the time complexity of computing the support of edges in $C$ is $O(|C|^{1.5})$. Secondly, identifying the top $s - |C|$ nodes that form the most capable triangles with $C$ takes $O(\delta_{V(R)}(V(C)) \cdot |C|)$ time, where $\delta_{V(R)}(V(C))$ denotes the number of nodes in $V(R)$ that are connected to $V(C)$. In addition, processing each \textit{extra node} takes $O(|C|)$ time. Therefore, the total time complexity is $O(|C|^{1.5}+\delta_{V(R)}(V(C)) \cdot |C|+(s-|V(C)|)\cdot |C|)$. 

\noindent\textbf{Grouped ESG upper bound computation}. Given two edge sets $C$ and $C'$ where $C'\subseteq C$, we have the fact that %$\tau_{min}(C')\ge \tau_{min}(C)$ always hold.  
$min\{C'\}$ is no less than  $min\{C\}$ in terms of edge trussness. 
This implies that the ESG upper bound computation can be applied to every subset of $C$ (denoted by $C'$) while the corresponding optimistically estimated c-triangles shall be constructed according to $C'$ and $R$. 
Since each such $C'$ leads to a feasible upper bound, by trying different $C'$, we may have a tighter lower bound. 
Especially in the case where $C'$ contains low trussness edges while weakly connecting to $R$. 
Observed this, we follow \cite{yao2021efficient} and apply edge upper bound for each subset of $C^{\leq\tau}\in C$ s.t. the maximum trussness of edges in $C^{\leq\tau}\in C$ is no greater than each $\tau \in [\tau_{min}(C),\tau_{max}(C)]$. 
The minimum upper bound among these serves as the grouped ESG upper bound.
 
\noindent\textit{\underline{Computational cost}}. It calls the computation of edge upper bound at most $|C|$ times, leading to a total complexity of $O(|C|\cdot \mathrm{ESG()})$, where $\mathrm{ESG()}$ is the complexity of ESG upper bound computation.

\noindent\textbf{Potential of tightening the upper bound}. Since our search framework progressively enlarges $C$ via the depth-first conversion, we pass down the current tightest upper bound and compare it to the edge-grouped upper bound and see if the upper bound can be further refined.

%\textbf{Computational costs}. 

\noindent\textbf{ESG upper bound summary}. Given the fact that all the discussed upper bounds are feasible (including~\cite{zhang2023size}), we always apply the tightest one.

%\noindent\textbf{Example}.
\begin{figure}[t]
  \centering
  \includegraphics[width=0.5\linewidth]{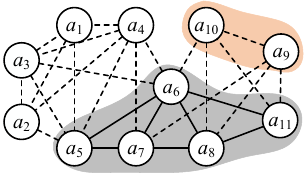}
  \caption{An example of upper-bound}
  \label{fig:expofup}
\end{figure}

\begin{example}
    Refer to Fig.~\ref{fig:expofup}, which is the derived graph $G_P$ from Fig.~\ref{fig:expofhin} when $\mP=(A,P,A),s=7,q=a_6$. Let $C=\{(a_5,a_6),$ $(a_5,a_7)$, $(a_6,a_7),$ $(a_6,a_8),$ $(a_6,a_{11}),$ $(a_7,a_8),$ $(a_8,a_{11})\}$ (highlighted in the gray circle) with the following support values $sup(a_5,a_6) = sup(a_5,a_7) = sup(a_6,a_{11}) = sup(a_7,a_8) = sup(a_8,a_{11})=1$, $sup(a_6,a_7) = sup(a_6,a_8) = 2$ in $G_P[V(C)]$, $R$ contains all edges not connecting to $a_4$ and not in $C$. 
    
    % The $U_1$ is 5, since $\min\nolimits_{(u,v) \in C}(\textnormal{sup}_{V(C)}(u,v)) = 1$, and $\min\{1 + 2 + 7 - 5, 7\} = 5$.
    We need to select $s-|V(C)|$=2 nodes with the highest score, namely $\{a_9, a_{10}\}$ (highlighted in the orange circle). Because $a_{10}$ forms 3 triangles with $(a_6, a_8), (a_6, a_{11})$ and $(a_8, a_{11})$, and $a_9$ forms 2 triangles with $(a_7,a_8)$ and $(a_8,a_{11})$. After adding $a_{10}$, the updated support values become $sup(a_5, a_6) = sup(a_5, a_7) = 1$, $sup(a_6, a_7) = sup(a_7, a_8) = sup(a_6,a_8) = sup(a_6,a_{11}) = \\sup(a_8,a_{11})=2$. After further adding $a_9$, the support updates to $sup(a_5, a_6) = sup(a_5, a_7) = sup(a_6, a_7) = sup(a_7, a_8) = sup(a_6,a_8) = sup(a_6,a_{11}) = sup(a_8,a_{11})=2$.\\ Thus, the ESG upper bound is $2+2=4$.

    For the grouped ESG upper bound, firstly, the edges in $C$ can be grouped as follows: $C^{\leq 1} = \{(a_5, a_6), (a_5, a_7), (a_6, a_{11}), (a_7, a_8)$, $(a_8, a_{11})\}$, $C^{\leq 2} = C$. For $C^{\leq 1}$ and $C^{\leq 2}$, the ESG upper bound is $3$ and $4$, respectively. So, the grouped ESG upper bound is $\min\{3,4\}=3$.
    \label{exp:u1}
\end{example}

\subsubsection{Upper bound for NSG}
Unfortunately, our proposed ESG upper bound does not work for NSG. 
This is because these upper bounds are estimated by increasing the minimum edge trussness. In ESG, these bounds work since once an edge is included in $C$, it cannot be removed from $C$. 
This is untrue when $C$ is a set of nodes because edges with low trussness could be removed, and only nodes added to $C$ cannot be removed. 
Therefore, we shall devise upper bounds by exploring other properties.

\noindent\textbf{Rationale}. Reminder that according to Definition~\ref{def:vtruss}, the trussness of a node $u$ for $C$ is the maximum edge trussness for edges incident to $u$ in $G_{P}[C]$. As such, to estimate an upper bound for NSG, we shall assume that triangles are added to edges incident to the node with the maximum trussness, which is totally different from edge-based upper bounds. 

Besides, to secure the optimistic estimation,  we shall consume the least number of triangles so that the trussness of a node can be increased. Luckily, one of the properties of the truss model provides us with such an opportunity, as shown below.
\begin{prop}
    For a ($k,\mP$)-truss, there must be at least $\frac{k(k-1)}{2}$ edges with $\tau(u,v) \geq k$ in $G_{P}$.
    \label{prop:p1}
\end{prop}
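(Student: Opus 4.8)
The plan is to argue directly from the definition of a $(k,\mP)$-truss. Recall that in a $(k,\mP)$-truss $S$, every $\mP$-pair has support at least $k-2$, meaning it participates in at least $k-2$ triangles all of whose edges lie in $S$. I would first show that $S$ cannot be ``too small'': if $S$ is nonempty, pick any $\mP$-pair $(u,v)\in S$. It lies in at least $k-2$ triangles, so there are at least $k-2$ distinct nodes $w_1,\dots,w_{k-2}$ each forming a triangle with $u$ and $v$; in particular each $w_i$ is a $\mP$-pair partner of both $u$ and $v$. Thus the node set $\{u,v,w_1,\dots,w_{k-2}\}$ has at least $k$ nodes. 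The goal is to promote this to a statement about edges with trussness $\ge k$.

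The cleaner route is to invoke the well-known structural fact that a $(k,\mP)$-truss, being triangle-dense, contains a ``$k$-clique-like'' core. Concretely, I would proceed by induction / peeling: consider the subgraph of $G_P$ on the edges of trussness exactly $\ge k$ (these edges themselves form a $(k,\mP)$-truss by the definition of trussness of a $\mP$-pair). In any such graph, every edge has support $\ge k-2$, so every vertex incident to an edge in it has degree $\ge k-1$ within it (the edge $(u,v)$ together with its $k-2$ triangle apexes gives $u$ at least $k-1$ neighbours). A graph in which a nonempty edge set has minimum degree $\ge k-1$ on its support has at least $k$ vertices, hence at least $\binom{k}{2}=\frac{k(k-1)}{2}$ ... but this counts vertices, not edges, so I need the stronger triangle-based counting: since each of the $\ge k$ vertices has $\ge k-1$ incident truss-edges, a handshake count gives at least $\frac{k(k-1)}{2}$ edges with trussness $\ge k$, and one must only check these edges are distinct, which follows because they all lie among the $\ge k$ vertices and each vertex contributes $\ge k-1$ of them.

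The main obstacle I anticipate is making the ``$\ge k$ vertices each with $\ge k-1$ truss-edges'' claim fully rigorous in the $\mP$-pair setting, since the triangle apexes $w_i$ of different edges need not be mutually $\mP$-paired — so the truss need not literally contain a $k$-clique, and a naive handshake double-counts or undercounts. The fix is to be careful: fix one edge $(u,v)$ of trussness $\ge k$; its $k-2$ triangles give $k-2$ edges $(u,w_i)$ and $k-2$ edges $(v,w_i)$, all of trussness $\ge k$ (each such edge lies in the triangle $uvw_i$ whose three edges are all in the trussness-$\ge k$ subgraph, so by definition of $\mP$-pair trussness each has trussness $\ge k$ only if that triangle extends to a $(k,\mP)$-truss — here I should instead argue within the fixed $(k,\mP)$-truss $S$ directly). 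Counting: $(u,v)$ plus the $k-2$ edges at $u$ plus the $k-2$ edges at $v$ gives $1+(k-2)+(k-2)=2k-3$ edges, which is already $\ge \frac{k(k-1)}{2}$ only for small $k$, so this local count is insufficient and I genuinely need the global minimum-degree-$\ge k-1$ argument over the whole truss-$\ge k$ subgraph, then handshake: $|E|\ge \frac{(k-1)\cdot k}{2}$. I would write the final proof in that order: (1) the trussness-$\ge k$ edges form a $(k,\mP)$-truss; (2) in it every incident vertex has degree $\ge k-1$; (3) it has $\ge k$ vertices; (4) handshake yields $\ge \frac{k(k-1)}{2}$ edges.
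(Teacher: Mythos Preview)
The paper states this property without proof, treating it as a standard structural fact about trusses. Your final four-step outline is correct and is the standard argument: within the given (nonempty) $(k,\mP)$-truss $S$, every edge $(u,v)$ has support $\ge k-2$, so $u$ is incident to at least $k-1$ edges of $S$; hence the vertex support of $S$ has size $\ge k$, and the handshake lemma yields $2|S|\ge k(k-1)$. Since every edge of $S$ has $\tau(u,v)\ge k$ by the definition of $\mP$-pair trussness, the conclusion follows.

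Two minor streamlining remarks. First, your step (1) is a detour: you need not pass to the global set of all trussness-$\ge k$ edges, because the given $(k,\mP)$-truss $S$ already has only edges of trussness $\ge k$, and the degree/handshake count can be carried out inside $S$ directly. Second, the earlier worries in your proposal --- the attempt to locate a $k$-clique, and the $2k-3$ local edge count around a single edge --- are dead ends and should simply be dropped from the write-up; the global minimum-degree-$\ge k-1$ plus handshake argument is both necessary and sufficient.
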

Based on the above property, in order to increase the trussness of a subgraph with trussness of $k$ optimistically, we just need to ensure that there are $\frac{(k+1)k}{2}$ edges that are with trussness of $k+1$ after adding some triangles.

We are ready to show the upper bound for NSG given a non-trivial recursion state with $C$ and $R$, where both $C$ and $R$ contain nodes only. 

\noindent\textbf{NSG upper bound}. The upper bound is based on the accurate $\tau_{\triangle}(C)$ in  $G_{P}[C]$, and how $R$ can maximally increase $\tau_{\triangle}(C)$. 

\noindent\textit{\underline{Capable triangle and edge of NSG}}. A triangle is called capable if it is 1) formed by edges with trussness no less than $\tau_{\triangle}(C)$, and 2) triangle reachable from $C$ via triangles satisfying condition 1). A capable edge is defined similarly to the capable triangle. %We define such capability since only such triangles and edges can improve the trussness of $C$ by the definition of the $k$-truss model.     
 
\noindent\textit{\underline{Optimistic number of capable triangles}}. 
For each node $u\in R$, let its score be the number of capable triangles that $u$ is involved in $G_{P}[C\cup R]$. 
For the top $s-|C|$ nodes with the highest score in $R$, they are assumed to move to $C$, and the sum of their scores is the optimistic number of triangles that contribute to nodes in $C$, denoted by $|\triangle_{opt}|$. 

\noindent\textit{\underline{Optimistic number of edges}}. Additionally, we also need the top $s-|C|$ nodes in $R$ that can potentially contribute the highest number of capable edges, denoted by $|E_{opt}|$. 
We need $|E_{opt}|$ because we assume that edges incident to the $s-|C|$ nodes exhibit sufficiently high trussness. They may directly contribute to nodes in $C$ with the minimum trussness. 

\noindent\textit{\underline{Unifying optimistic triangles and edges}}. To ensure an optimistic result, we assume that there exists a size $s-|C|$ node set that is optimistic in terms of both capable edges and triangles. 

\noindent\textit{\underline{How triangles and edges are optimistically consumed}}. Based on the above discussion, given $C$ with trussness of $k=\tau_{\triangle}(C)$, it is not in $k+1$ truss since it has less than $\frac{k(k+1)}{2}$ edges with trussness of $k+1$ shown in \textsc{Property}~\ref{prop:p1}. So initially, the upper bound is the same as $\tau_{\triangle}(C)$, i.e., $k$. 
Clearly, after adding the discussed $s-|C|$ nodes, it brings $|E_{opt}|$ edges with sufficiently large trussness. As such, we need at least $\frac{k(k+1)}{2}-|E_{opt}|$ (assuming $|E_{opt}|$ is smaller than $\frac{k(k+1)}{2}$)\footnote{otherwise, we first increase $k$ till $\frac{k(k+1)}{2}>|E_{opt}|$.} edges with trussness of $k+1$ to increase the trussness of $C$ by $1$. To do so, we assume that triangles in $\triangle_{opt}$ are used to increase the trussness of edges in $G_{P}[C\cup R]$ whose trussness is closest to $k+1$. Every time the trussness of an edge is increased, one triangle in $\triangle_{opt}$ is consumed. If all triangles in $\triangle_{opt}$ have been consumed and there are still no  $\frac{k(k+1)}{2}-|E_{opt}|$ edges with trussness of $k+1$, the upper bound does not change. If not all the triangles are consumed, the upper bound is increased by $1$. 
If there are triangles in $\triangle_{opt}$ left, we set $k=k+1$ and repeat the above process till all triangles are consumed. The updated upper bound serves as our NSG upper bound.     

\noindent\textit{\underline{Admissibility}}. The admissibility is clear. First, the condition that we increase the trussness of $C$ is the minimum requirement to form a $k+1$-truss. Then, we assume $s-|C|$ nodes bring the most number of capable triangles and edges to $C$. Third, we assume that all the capable triangles contribute to edges that are most likely to increase the trussness, and all the capable edges have sufficient trussness.

Due to the fact that we increase the trussness of edges with the maximum trussness, $\tau_{max}(C')\geq\tau_{max}(C)$ cannot hold anymore for $C'\subseteq C$. As such, we cannot further tighten the upper bound as the one for ESG. 

\noindent\textit{\underline{Computational cost}}. The time complexity of NSG upper bound is also dominated by triangles in $G_{P}[C\cup R]$, i.e., $\mathcal{O}(|E(G_{P}[C\cup R])|^{1.5})$.

\noindent\textbf{NSG upper bound summary}. We can apply the minimum of the upper bound used in~\cite{zhang2023size} and our proposed upper bound to the algorithm. If the derived upper bound is no greater than $\tau_{\triangle}(C^{*})$ (the lower bound up to the time), Algorithm~\ref{alg:nf} can backtrack to other recursion states. 

%\noindent\textit{\underline{Computational cost}}. The time complexities of truss decomposition on $C$ is $O(|E(C)|^{1.5})$. Computing $|\triangle_{opt}|$ and $|E_{opt}|$ takes $O(|E(C)| \cdot \delta_{R}(C))$ and $O(|C| \cdot \delta_{R}(C))$ respectively, where $\delta_{R}(C)$ denotes the number of nodes in $R$ connected to $C$. Proceeding each node takes $|E(C)|$ times and there are $s-|C|$ such nodes to consider. Thus the overall time complexity is $O(|E(C)|^{1.5}+|E(C)| \cdot \delta_{R}(C)+(s-|C|)\cdot |E(C)|)$.

% The time complexities of constructing $\mathit{pl}$, $nT'$, and $D'$ are $O(|E(C_V)|^{1.5})$, $O(|E(C_V)| \cdot \delta_{R_V}(C_V))$, and $O(|C_V| \cdot \delta_{R_V}(C_V))$, respectively, where $\delta_{R_V}(C_V)$ denotes the number of nodes in $R_V$ connected to $C_V$. Since $O(|E(C_V)| \cdot \delta_{R_V}(C_V))$ dominates $O(|C_V| \cdot \delta_{R_V}(C_V))$, only the former is considered in the overall complexity. Constructing $pl'$ takes $O(|pl'|)$ time. Lines 9--23 run in $O(|pl'| \cdot (s - |C_V|))$ time and are executed $|C_V|$ times, leading to a total cost of $O(|C_V| \cdot (s - |C_V|) \cdot |pl'|)$ for lines 5--23. Thus, the overall time complexity is $O(|E(C_V)|^{1.5}) + O(|E(C_V)| \cdot \delta_{R_V}(C_V)) + O(|C_V| \cdot (s - |C_V|) \cdot |pl'|)$.

% \begin{figure}[h]
%   \centering
%   \includegraphics[width=\linewidth]{figures/nodeUBprocedure.pdf}
%   \caption{Procedure of node-based upper bound}
%   \label{fig:nodeUBprocedure}
% \end{figure}

\begin{example}
    Using Fig.~\ref{fig:expofup} as an example, the query setting and recursion state are the same as in Example~\ref{exp:u1}, but change the edges set to the nodes set. The trussness of all nodes in $C$ is $3$, the two nodes with the highest scores are $a_{10}$ and $a_9$, i.e., $|\triangle_{opt}|=5$ and $|E_{opt}|=7$. To increase the trussness of $C$ to 4, we need $\frac{4 \times 3}{2} - 3 = 3$ edges with trussness of at least 4, which consumes 3 triangles. After this step, the trussness of the edges become: $4$, $4$, $4$, $3$, $3$, $3$, $3$. In the next iteration, to increase the trussness to 5, it would require $\frac{5 \times 4}{2} - (3 + 3 + 1) = 10 - 7 = 3$ edges with trussness of at least 5. Because there are 3 edges with trussness of 4 after the previous iteration, it still requires 3 triangles. However, only 2 triangles remain, which is insufficient to support this increase. Therefore, the NSG upper bound is 4.

    %Consider the computation of the upper bound of trussness for node $a_1$. All $P$-pairs among $C_V$ form triangles with $a_1$, and its current trussness is $k_{a_1} = 4$. When $i = 0$, we have $remainTri = 6$ and $accumD = 4$. To increase the trussness to 5, it requires $\frac{5(5-1)}{2} - 4 = 6$ $P$-pairs with trussness greater than 5, i.e., $neededTri = 6$. Since $remainTri = 6$, the trussness can be increased to 5. When $i = 1$, $remainTri = 1$ and $accumD = 4+2=6$. To reach a trussness of 6, it now requires $\frac{6(6-1)}{2} - 6 = 9$ such $P$-pairs. However, even after consuming all available triangles, this requirement cannot be met. Therefore, the upper bound of $a_1$’s trussness is 5. By applying the same procedure to all nodes in $C_V$, the resulting upper bound of trussness for each node is 5.
\end{example}

\subsection{Branching strategy}\label{sec:branch}
We present a novel branching strategy in this subsection. 

\noindent\textbf{Rationale and SOTA}. Given a recursion state with $C$ and $R$, and two nodes $u$ and $v$ in $R$, if the cohesiveness of every result led by branching at $v$ with $R\setminus \{u\}$ is no more than that of the result by replacing $u$ to $v$, then we can avoid branching at $v$ in the current recursion state while preserving the optimum results. 

For both core and truss-based models, we can examine whether $N(v)\subseteq N(u)$ holds in $G[C\cup R]$ to see if $v$ can be avoided or not. This is referred to $u$ dominating $v$. In this paper, we call such dominance as $u$ edge-dominating $v$ in $G[C\cup R]$, denoted by $u\succeq_{E} v$.    

Based on the above discussion, within their framework, when branching at a node $u$, every node $v$ dominated by $u$ will not be branched, and branches including each pair of $\{u,v\}$ will be attempted, in addition to the normal inclusion and exclusion branches.  

We advance existing ideas in twofold. First, we propose a triangle-based dominance approach, which can be used to expand the node set that can be dominated by a node. Second, we propose a novel branching strategy that aims to minimise the number of branches as much as possible for each recursion state. 

\noindent\textbf{Triangle based dominance}. Given a recursion state $C$ and $R$, let $\triangle(u)$ and $\triangle(v)$ denote the set of triangles that $u$ and $v$ involve in $G[C\cup R]$, if $\triangle(v)\subseteq \triangle(u)$, $u$ is triangle-dominating $v$ in $G[C\cup R]$ denoted by $u\succeq_{\triangle} v$. 

The above discussion is also correct under the context of $G_{P}$. 

\noindent\textbf{Combining edge and triangle dominance}. It is trivial to see that triangle dominance has the same effect as edge dominance. However, since the conditions of triangle dominance are stricter, we must apply it carefully. For a node $u\in R$, we first generate a node set that can be dominated by $u$ using edge-dominance, denoted by $\Phi_{E}(u)$. Then, for nodes that cannot be edge-dominated by $u$ but have common neighbours with $u$, we apply triangle-dominance to generate a node set that can be dominated by $u$, denoted by $\Phi_{\triangle}(u)$. Then $\Phi(u)=\Phi_{E}(u)\cup \Phi_{\triangle}(u)$ is set of nodes that can be dominated by $u$.  

\noindent\textit{\underline{Effectiveness}}. It is easy to see $\Phi(u)\supseteq \Phi_{E}(u)$ always holds. This indicates that, using our proposed dominance, each node can potentially branch at fewer nodes in $R$. 

Now, we are ready to propose the novel branch strategy. % as follows. 
 \begin{ruledef}
     \label{rule:bran}
     Given a recursion state with $C$ and $R$, select a node with the maximum dominance in $R$, i.e., $u$ is $\argmax_{v}\{\Phi(v)|v\in R\}$. Then, for this recursion state, we only branch at each node $u'$ in $R\setminus \Phi(u)$,i.e., each induced subproblem has an input of $C \cup \{u'\}$ and $R$. 
 \end{ruledef}

\textsc{Rule}~\ref{rule:bran} is a branching rule since it does not remove the dominated nodes from $R$ for the subproblems induced by the current recursion state. Its correctness is clear based on the discussed dominance.  

\noindent\textit{\underline{Computational cost}}. It is evident that the time complexity of computing triangle-based dominance is greater than that of edge-based dominance. Identifying all nodes involved in triangles requires $O(|E(G_P[C \cup R])|^{1.5})$ time, which dominates the overall cost.

\begin{example}
    In Fig.~\ref{fig:expofGp}, $a_3$ dominates $\{a_1, a_2\}$, $a_4$ dominates $\{a_1, a_2, a_3, a_5\}$, and $a_{10}$ dominates $a_{11}$. Therefore, $a_4$ is chosen to reduce the number of branches. Since all these dominances are triangle-based, this example does not clearly highlight the effectiveness of triangle-based dominance over edge-based dominance. But we can consider a simple modification: change the edge $(a_9, a_7)$ to $(a_9, a_5)$. After this change, $a_{10}$ cannot dominate $a_9$ based on edges. However, $a_9$ can still be triangle-dominated, because $(a_9, a_5)$ does not participate in any triangle. As a result, $\triangle(a_9) \subseteq \triangle(a_{10})$, and $a_9$ can be triangle-dominated by $a_{10}$. This can expand the scope of domination of $a_{10}$, illustrating the effectiveness of triangle-based dominance.
\end{example}

Triangle-based dominance can be applied to ESG. Due to the similarity, we do not repeat it.   

\subsection{A total search order and early termination}\label{sec:order}
Thanks to the new framework, we can enforce a total search order for our algorithm, which can improve time complexity as well as bring early termination opportunities. This idea is popular and critical for many other problems, such as maximal clique and biclique enumeration. We now introduce it to the size-bounded community search problem.

\noindent\textbf{Our proposed ordering $\mathcal{O}$}. We propose a distance-first, truss-second-order. I.e., for $R$ that cannot be pruned, we first sort nodes in $R$ in non-decreasing order in terms of their distance to the query node $q$. Then, for the nodes with the same distance, nodes are then sorted in non-increasing order in terms of their trussness in $G_{P}[C \cup R]$.

It is trivial to see that using $\mathcal{O}$ as well as our introduced new framework, Algorithm~\ref{alg:nf} can terminate much earlier if one of the following conditions is satisfied in order. 

\begin{stopC}
For the current $v$, if  $dist(v,q)>1$, Algorithm~\ref{alg:nf} can terminate.  
\end{stopC}

\begin{stopC}
For the current $v$, if $\tau_{\triangle}(v) $ in $G_{P}[C \cup R]$ is no greater than $\tau_{\triangle}(C^{*})$, Algorithm~\ref{alg:nf} can terminate. 
\end{stopC}

\begin{stopC}
If the condition in our bounding technique is satisfied, Algorithm~\ref{alg:nf} can terminate.   
\end{stopC}

The correctness of the above conditions is clear. Later, when analysing the time complexity, \textsc{Stop condition} 1 is critical for reducing the time complexity. 

\begin{example}
    Using Fig.~\ref{fig:expofGp} as an example, let $q = a_6$ and $\tau_{\triangle}(C^*) = 4$. When $R = \{a_2, a_9\}$, Algorithm~\ref{alg:nf} terminates successfully, i.e. \textsc{Stop Condition~1}, because $C\cup R$ is no longer connected to the query node $q$. Furthermore, when $R = \{a_7, a_8, a_9, a_{10}, a_{11}\}$, Algorithm~\ref{alg:nf} triggers \textsc{Stop Condition~2} since there are no nodes with trussness of at least 5. Since the example of \textsc{Stop Condition~3} closely resembles the proposed bounding techniques, we omit it here for brevity.
\end{example}

The above conditions can be refined slightly for ESG. 

\subsection{Candidate reduction}\label{sec:reduction}
In this subsection, we discuss several reduction rules that can be applied to a candidate set $R$. Notice that this subsection is mainly for self-completeness purposes since, except for the reduction rule involving our proposed lower and upper bounds, the distance-based reduction rules are based on properties of trussness and have been used in~\cite{xuEfficientTriangleConnectedTruss2022}.

%The idea of reduction rules is similar to that in Section \ref{sec4.1}. However, we propose four rules specifically for node combination enumeration. Since the first three rules are analogous to those in Section \ref{sec4.1}, we directly adapt them for node combination enumeration without further explanation. We then provide a detailed introduction to the last rule.

\begin{ruledef}
    Given a search state $(C,R)$, for any node $v\in R$, if $\tau_{C}(v) \leq \tau_{\triangle}(C^*)$ holds in $G_{P}[C\cup R]$, $v$ can be removed from $R$.
    \label{rule:tauv}
\end{ruledef}

\begin{ruledef}
    Given a search state $(C, R)$, for any node $v \in R$, if $v$ is not $(\tau_{\triangle}(C^*)+1)$-triangle reachable from $C$ in $G_{P}[C\cup R]$, $v$ can be removed from $R$.
    \label{rule:TCv}
\end{ruledef}

\begin{ruledef}
    Given a search state $(C,R)$ with size constraint $s$, for any node $v$ in $R$, define $\varepsilon=\max_{u\in C}\operatorname{dist}_{G_P[C\cup R]}(u,v)$. If $\lceil \frac{\varepsilon(\tau_{\triangle}(C^*)+2)}{2}\rceil> s$ then node $v$ can be removed from $R$.
    \label{rule:disv}
\end{ruledef}

All the above rules can be trivially refined for ESG; therefore, we omit the details.

\begin{figure}[t]
  \centering
  \includegraphics[width=0.5\linewidth]{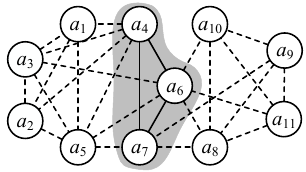}
  \caption{An example of candidate reduction}
  \label{fig:expofedgereduce}
\end{figure}

\begin{example}
    Using Fig. \ref{fig:expofedgereduce} as an example. Let $s=4$, $C=\{a_4,a_6,a_7\}$ (highlighted in the gray circle), $R=\{a_1,a_2,a_3,a_5,a_8,a_9,a_{10},a_{11}\}$ and $\tau_{\triangle}(C^*)=3$.
    
    By Rule~\ref{rule:tauv}, no nodes can be pruned, because all nodes have trussness of at least $4$. Additionally, by Rule~\ref{rule:TCv}, nodes $\{a_8,a_9,a_{10},a_{11}\}$ are excluded from $R$, because they are not $4$-triangle-connected with $C$. According to Rule~\ref{rule:disv}, for $a_1$, It has $\varepsilon=2$ and $\lceil\frac{2\times(3+2)}{2}\rceil=5>4$ thus, $a_1$ can be discarded from $R$. 
    The same applies to $a_2$ and $a_3$.
    %Similarly, $a_2$ and $a_3$ are removed. 
    As such, the updated state becomes $C = \{a_4, a_6, a_7\}$ and $R = \{a_5\}$.
\end{example}

\subsection{Wrap-up}\label{sec:wrap}
In this subsection, we wrap up with the time complexity analysis.

\noindent\textbf{NSG}. In terms of $G_{P}$, the time complexity is $ \mathcal{O}(|N(q,G_{P})|$ $\binom{|V(G)|}{s-2}$ $|E(G_{P})|^{1.5})$. Computing $G_{P}$ can be bounded by $\mathcal{O}(|V(G)||E(G)|)$, where $G$ is the input HIN. 
$G_{P}$ can be computed much fast according to the query meta-path. 
The main cost of each recursion is induced by computing triangles in $G_{P}[C\cup R]$, which can be bounded by $|E(G_{P})|^{1.5}$ and should be much faster since we do not compute it from scratch. Compared to the complexity mentioned in Section~\ref{sec:frame}, the exponential part is reduced to  $|N(q, G_{P})|\binom{|V(G)|}{s-2}$ thanks to our proposed total ordering and early \textsc{Stop condition 1}, i.e., line 4 in Algorithm~\ref{alg:heu} is bounded by $|N(q,G_{P})|$ and Algorithm~\ref{alg:heu} only needs to check up to $\binom{|V(G)|}{s-2}$ size node sets.    

\noindent\textbf{ESG}. The time complexity is $\mathcal{O}$ $(|N(q,G_{P})|$ $\binom{|E(G)|}{\frac{(s-1)(s-2)}{2}}$ $|E(G_{P})|^{1.5})$. It has a much higher exponential part compared to NSG. This is because we have to add $\frac{(s-1)(s-2)}{2}$ edges so that the node size can reach $s-1$ in the worst case. 

Although we can explore more pruning opportunities for ESG, NSG exhibits much lower time complexity. We will conduct comprehensive experiments to evaluate their practical performance.

\section{Experiments}
\label{sec:exp}
In this section, we present experimental results to demonstrate the effectiveness and efficiency of the proposed methods.% First, we describe the experimental setup, including the dataset, baseline algorithms, and environment. Next, we evaluate the methods' effectiveness. Finally, we analyse their efficiency.
%\subsection{Experiment settings}

\noindent\textbf{Datasets.}
We use $5$ real datasets.
Their statistics are shown in Table \ref{datasets}, where $|E(G_P)|$ represents the average edge count of the graphs $G_P$ derived from different meta-paths $\mP$ in the meta-path pool.
The first four datasets are domain-specific.
Specifically, Amazon is a platform from the e-commerce domain, encompassing users, products, reviews, categories, and brands. 
DoubanMovie is a movie-related dataset containing entities such as movies, genres, users, groups, directors, and actors. 
DBLP is a bibliographic dataset comprising authors, papers, topics, and venues. 
Aminer is also a bibliographic dataset but includes additional node types beyond those in DBLP, such as publication years and references.
The last dataset, Freebase, is a large-scale knowledge graph encompassing diverse domains, including music, film, and sports.

\begin{table}[h]
\caption{Datasets}
\label{datasets}
\small
\centering
\setlength{\tabcolsep}{4pt}
\begin{tabular}{ccccccc}
\hline
Datasets    & Nodes      & Edges      & $|E(G_P)|$    & $|\mathcal{A}|$ & $|\mathcal{R}|$ & \#$\mP$ \\ \hline
Amazon      & 13,136     & 209,746    & 5,929,707     & 5               & 4               & 10                      \\
DBLP        & 37,791     & 170,794    & 4,501,017     & 4               & 4               & 10                      \\
DoubanMovie & 37,595     & 1,713,902  & 26,234,499    & 6               & 6               & 10                      \\
Aminer      & 439,504    & 1,130,039  & 354,466,972   & 6               & 6               & 10                      \\
Freebase    & 12,116,680 & 35,231,642 & 1,428,764,912 & 8               & 36              & 100                     \\ \hline
\end{tabular}
\end{table}

\noindent
\textbf{Algorithms.} We compare the following algorithms.
\begin{itemize}
    \item SC-BRB \cite{yao2021efficient}: size constraint $k^{*}$-core.
    \item ST-Exa \cite{zhang2023size}: size constraint $k^{*}$-truss.
    \item SCSHEP$^+$: ESG equipped with all techniques.
    \item SCSHEV$^+$: NSG equipped with all techniques.
\end{itemize}

Due to the fact that SC-BRB and ST-Exa are designed for homogeneous graphs, we adapt them with $G_P$ as input.
% SC-BRB, and ST-Exa are designed for homogeneous graphs and cannot be directly applied to our problem, as nodes of the same type are not directly connected in heterogeneous graphs. Therefore, in our experiments, we apply these methods to $G_P$, which is derived from the heterogeneous information network using the given meta-path $P$.

\noindent
\textbf{Queries.} For each dataset, we construct a meta-path pool, with sizes reported in Table \ref{datasets}. The maximum meta-path length is 6, following established settings in prior studies \cite{fang2020effective} and \cite{yang2020effective}. The size bound $s$ varies from 9 to 21 in increments of 3, i.e., $s\in\{9,12,15,18,21\}$, consistent with previous works \cite{yao2021efficient, zhang2023size}.

We generate $100$ queries for each dataset. To generate queries, a meta-path is randomly selected from the pool, and $10$ query nodes corresponding to the selected meta-path are sampled. 
For each query node, we evaluate all size bounds from $9$ to $21$. Each query execution is constrained by a one-hour time limit. 
All algorithms are evaluated on the same set of queries for fair comparison and implemented in Python, running on a Windows system with an Intel(R) Xeon(R) W-2133 CPU @ 3.60GHz and 32 GB RAM.

\subsection{Effectiveness Evaluation}
We evaluate the effectiveness of SCSHEV$^+$, SC-BRB, and ST-Exa w.r.t., focusing on two widely used metrics: density and similarity.

\begin{figure*}[h]
    \centering
    \vspace{-10pt}
    \subfigure{
        \includegraphics[width=.2\textwidth]{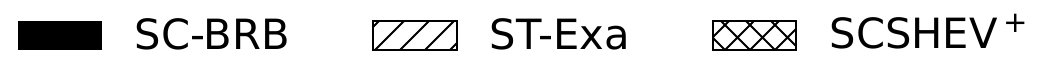}
    }
    \\[-10pt]
    \subfigure[Amazon]{
        \label{fig:Amazondensity}
        \includegraphics[width=.18\textwidth]{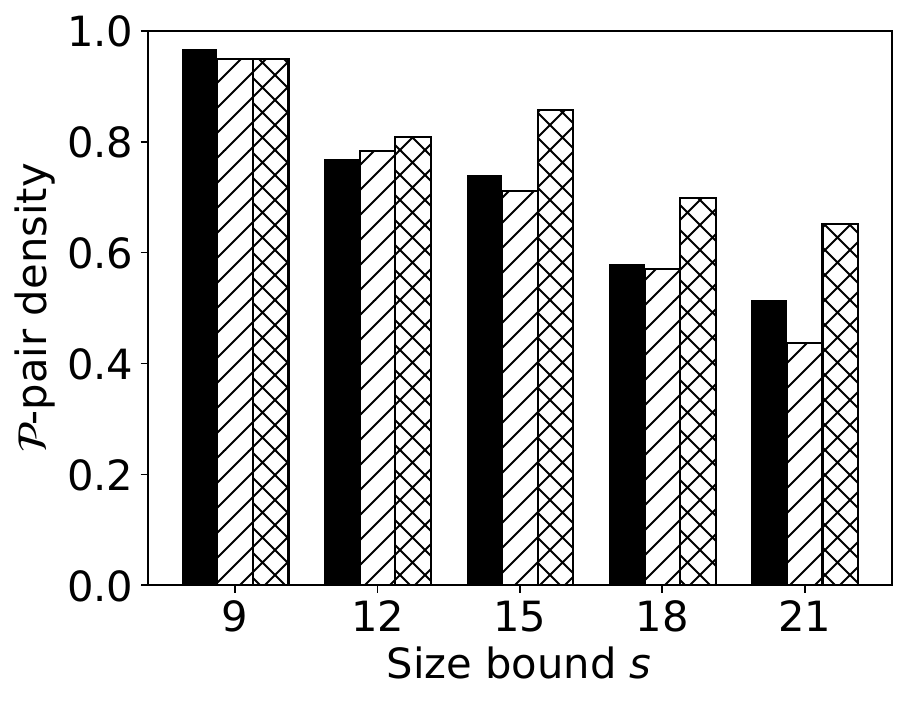}
    }
    \subfigure[DBLP]{
        \label{fig:DBLPdensity}
        \includegraphics[width=.18\textwidth]{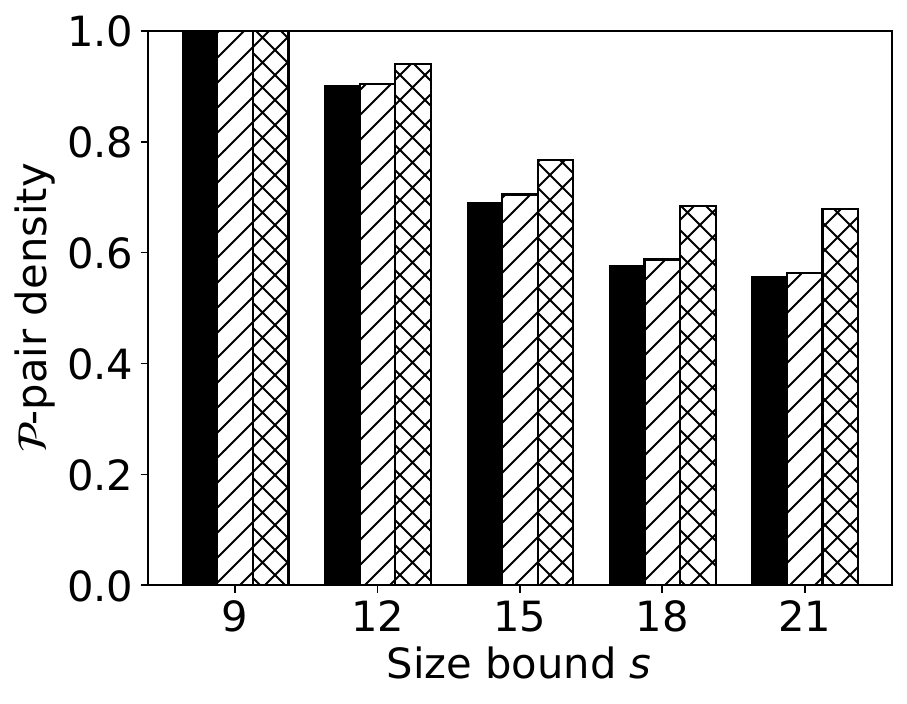}
    }
    \subfigure[DoubanMovie]{
        \label{fig:DoubanMoviedensity}
        \includegraphics[width=.18\textwidth]{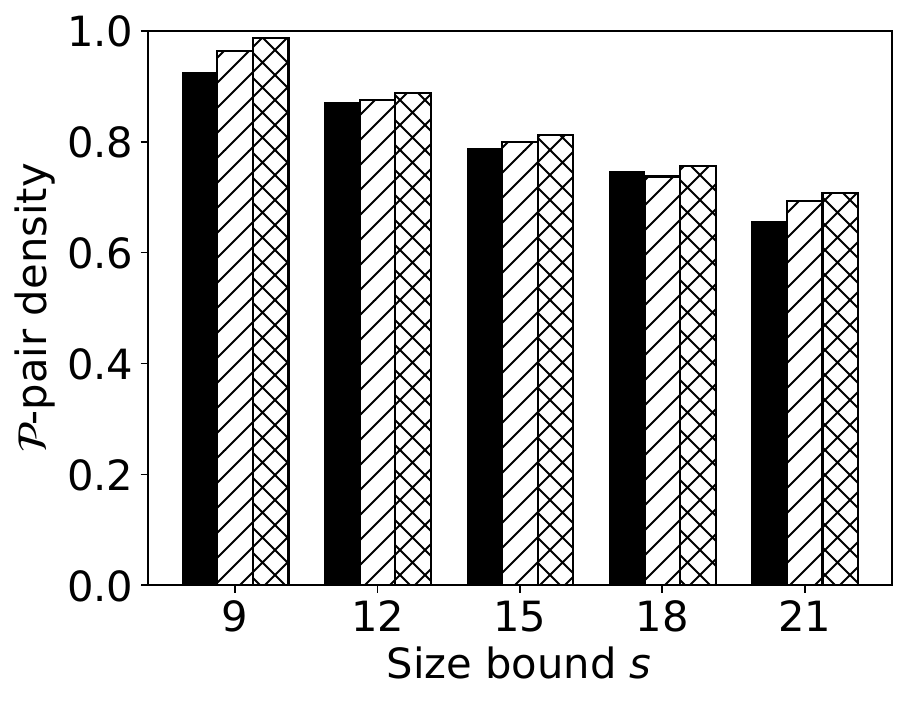}
    }
    \subfigure[Aminer]{
        \label{fig:Aminerdensity}
        \includegraphics[width=.18\textwidth]{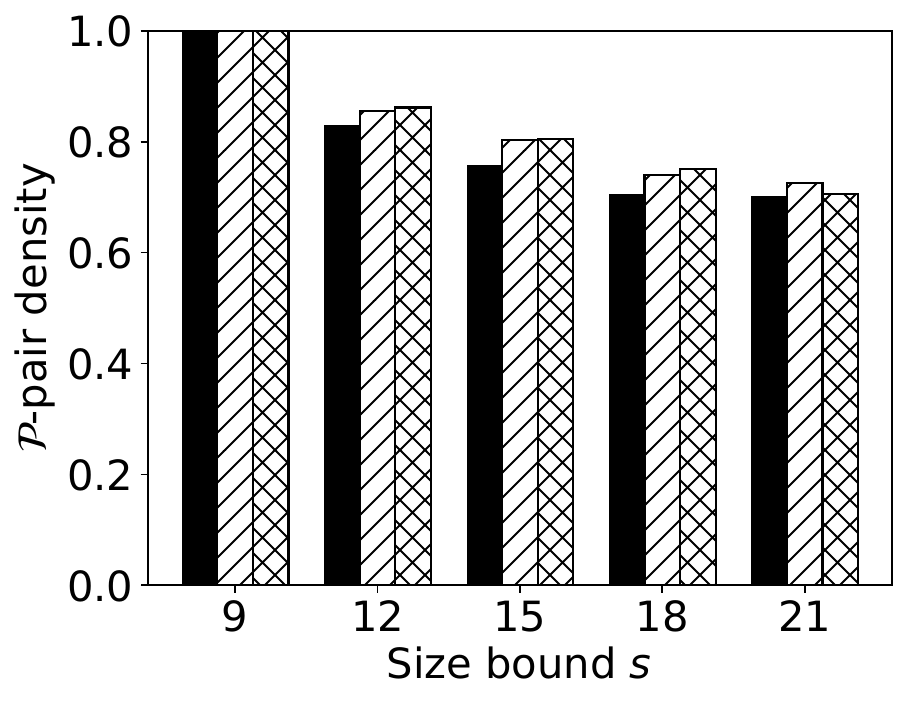}
    }
    \subfigure[Freebase]{
        \label{fig:Freebasedensity}
        \includegraphics[width=.18\textwidth]{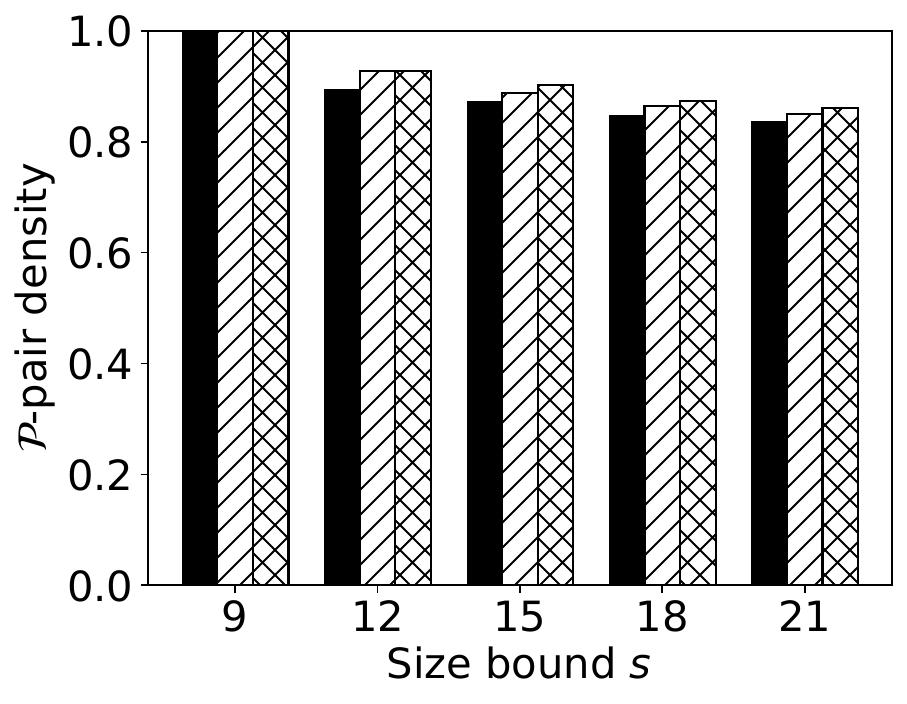}
    }
    \caption{$\mP$-pair density}
    \label{fig:density}
\end{figure*}

\begin{figure*}[h]
    \centering
    \vspace{-10pt}
    \subfigure{
        \includegraphics[width=.2\textwidth]{figures/densitylegend.pdf}
    }
    \\[-10pt]
    \subfigure[Amazon]{
        \label{fig:Amazonsimilarity}
        \includegraphics[width=.18\textwidth]{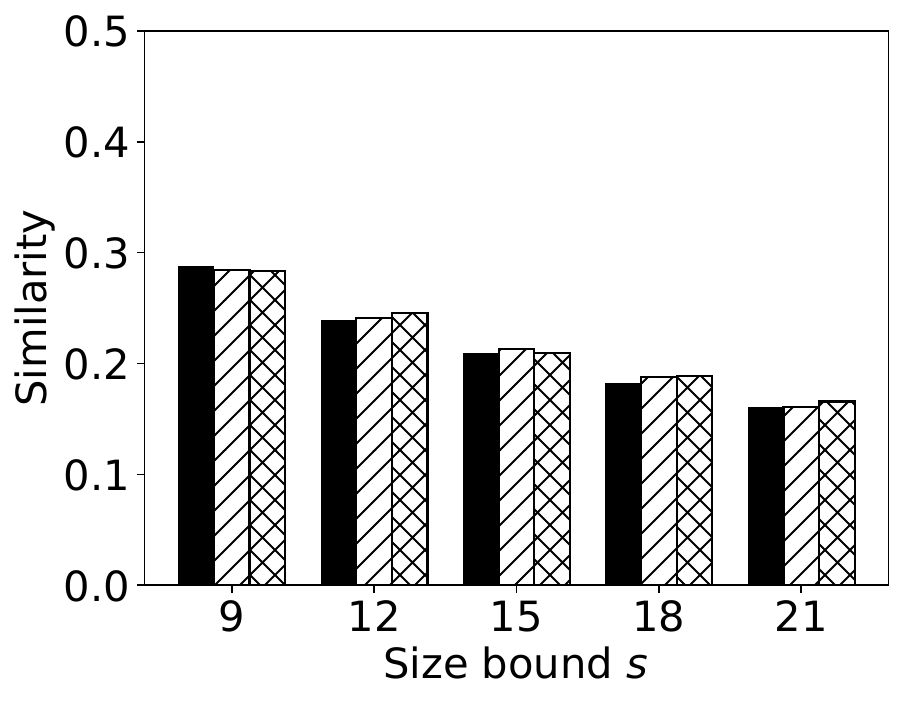}
    }
    \subfigure[DBLP]{
        \label{fig:DBLPsimilarity}
        \includegraphics[width=.18\textwidth]{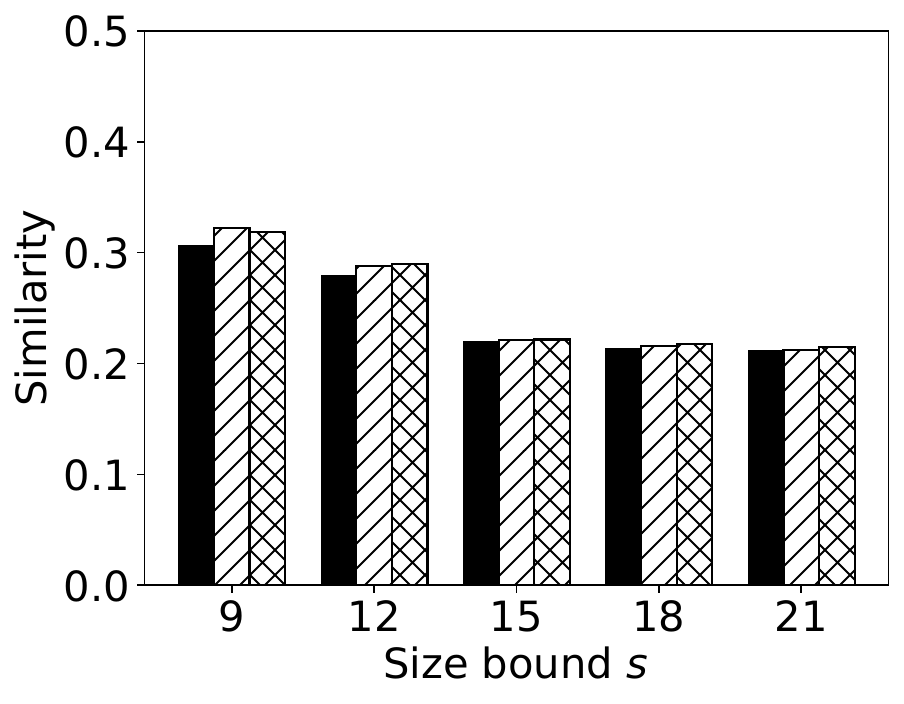}
    }
    \subfigure[DoubanMovie]{
        \label{fig:DoubanMoviesimilarity}
        \includegraphics[width=.18\textwidth]{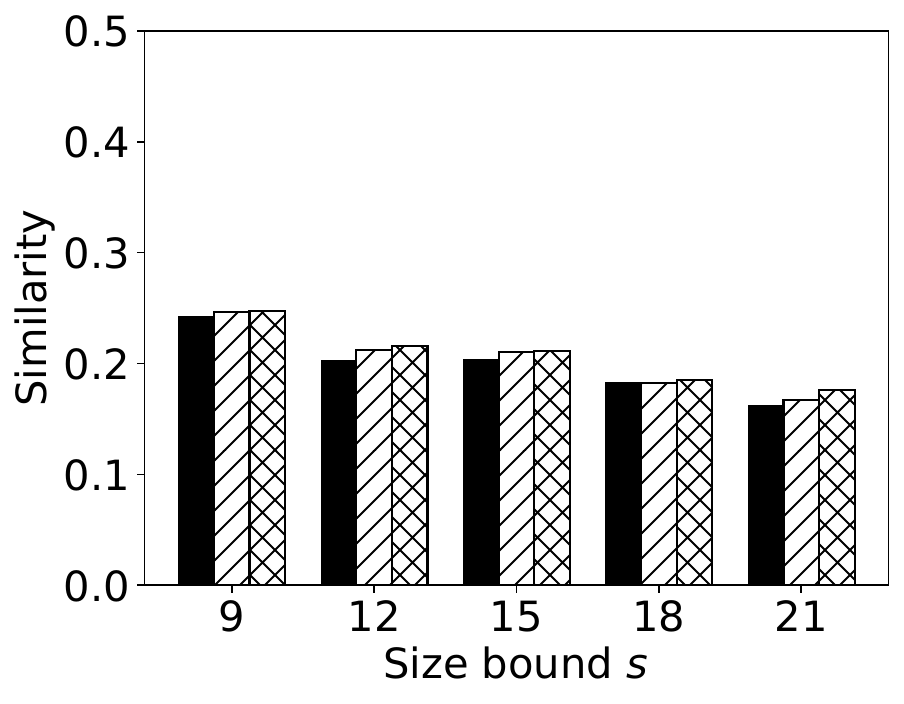}
    }
    \subfigure[Aminer]{
        \label{fig:Aminersimilarity}
        \includegraphics[width=.18\textwidth]{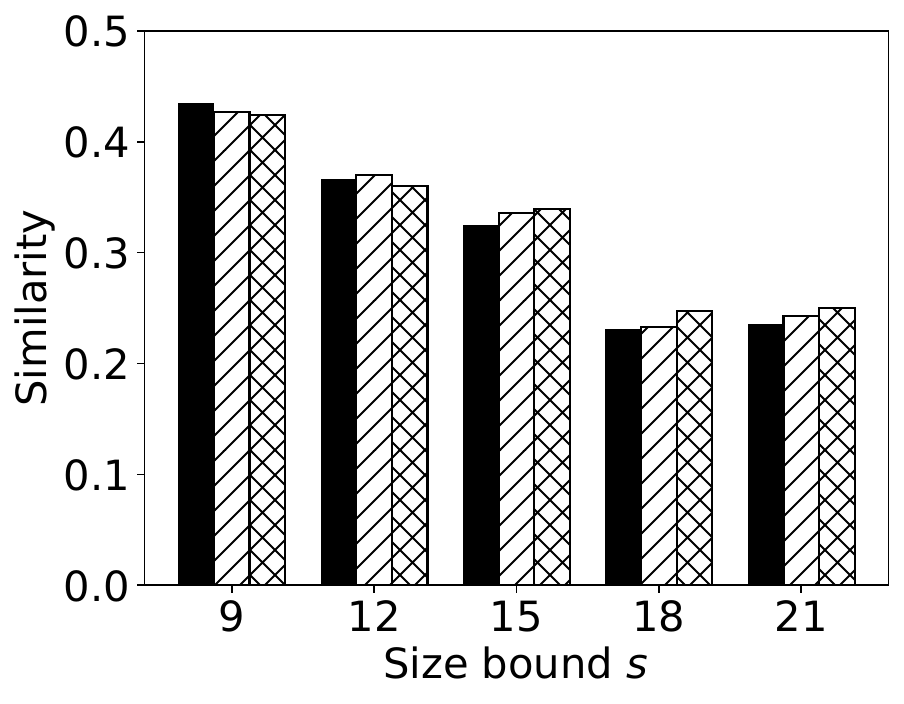}
    }
    \subfigure[Freebase]{
        \label{fig:Freebasesimilarity}
        \includegraphics[width=.18\textwidth]{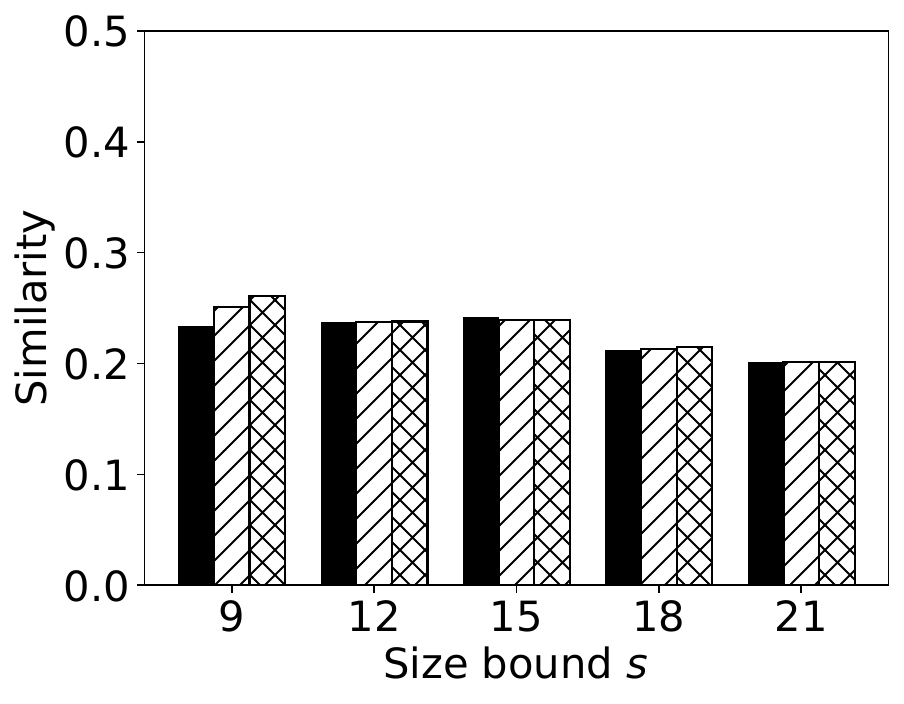}
    }
    \caption{Similarity}
    \label{fig:similarity}
\end{figure*}

%\subsubsection{Density of $\mP$-pairs}
\noindent\textbf{Density of $\mP$-pairs}. The concept of $\mP$-pair density, introduced in \cite{fang2020effective}, is the number of $\mP$-pairs over the number of target nodes that comprise the $\mP$-pairs. Fig. \ref{fig:density} presents the density results across all datasets. As the size bound $s$ increases, density generally decreases, which aligns with the intuition that larger subgraphs are less likely to be fully connected. Among the compared methods, SCSHEV$^+$ consistently achieves the highest density, owing to its stricter structural constraints compared to $k$-core and $k$-truss, enabling it to identify denser substructures. Notably, in the DBLP, Aminer, and Freebase datasets, SCSHEV$^+$ achieves a density of 1 when $s=9$, indicating that it discovers cliques under small size bounds.

%\subsubsection{Similarity of community}

\noindent\textbf{Similarity}. We use PathSim\cite{sun2011pathsim} to compute node similarity, and define the similarity of a community as the average PathSim score across all node pairs within a community. The results are presented in Fig. \ref{fig:similarity}. As the size bound $s$ increases, the average similarity tends to decrease, which is expected since larger communities are less likely to maintain high internal similarity. In most cases, SCSHEV$^+$ achieves the highest similarity scores because its communities are more densely connected, leading to higher semantic similarity among members.
   
%\subsubsection{Case study} 
\noindent\textbf{Case study}. In the case study, we use the Amazon dataset with query node i385, the meta-path $\mP = \textit{(item, view, item)}$, and a size bound of 15. Fig.~\ref{fig:SCBRBcase} presents the result returned by SC-BRB, which produces a 6-core with a $\mP$-pair density of 0.476. Fig.~\ref{fig:SCSHEVcase} shows the result returned by SCSHEV$^+$, which yields a 5-core that also forms a triangle-connected 6-truss, with a higher $\mP$-pair density of 0.562.

Now, consider a user searching for product i385, and the platform aims to recommend related products. As shown in Fig.~\ref{fig:SCBRBcase}, the result consists of two loosely connected communities, indicating that the recommended products are not strongly related to i385. In contrast, Fig.~\ref{fig:SCSHEVcase} reveals that most of the nodes belong to the same tightly-knit community as i385, suggesting stronger semantic relevance and a more effective recommendation outcome.

\begin{figure}[h]
    \centering
    \subfigure[SC-BRB]{
        \label{fig:SCBRBcase}
        \includegraphics[width=.28\textwidth]{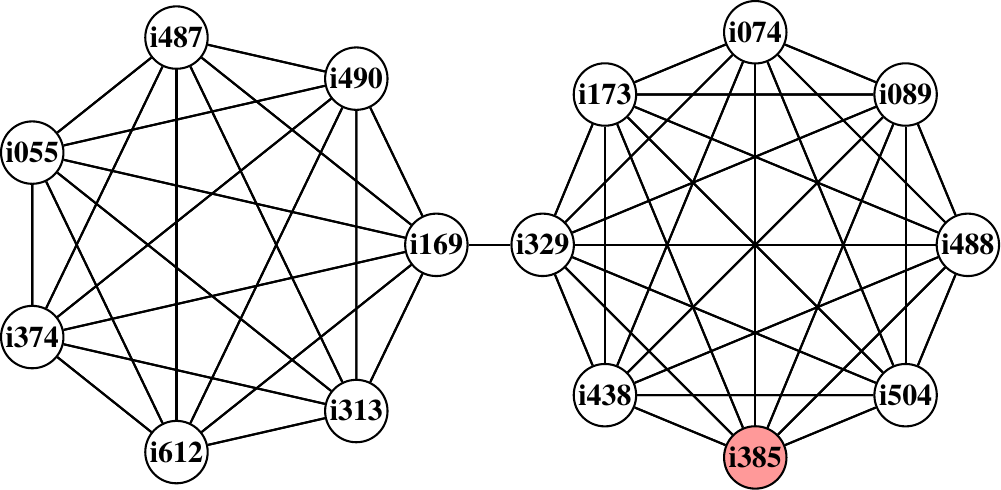}
    }
    \subfigure[SCSHEV$^+$]{
        \label{fig:SCSHEVcase}
        \includegraphics[width=.17\textwidth]{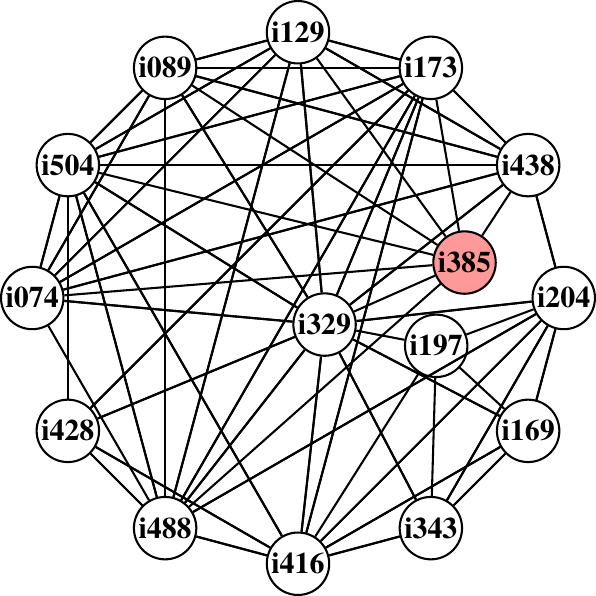}
    }

    \caption{Case study}
    \label{fig:DT}
\end{figure}

\subsection{Efficiency Evaluation}

\begin{figure*}[h]
    \centering
    \vspace{-10pt}
    \subfigure{
        \includegraphics[width=.2\textwidth]{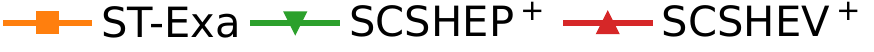}
    }
    \\[-10pt]
    \subfigure[Amazon]{
        \label{fig:Amazonefficiency}
        \includegraphics[width=.18\textwidth]{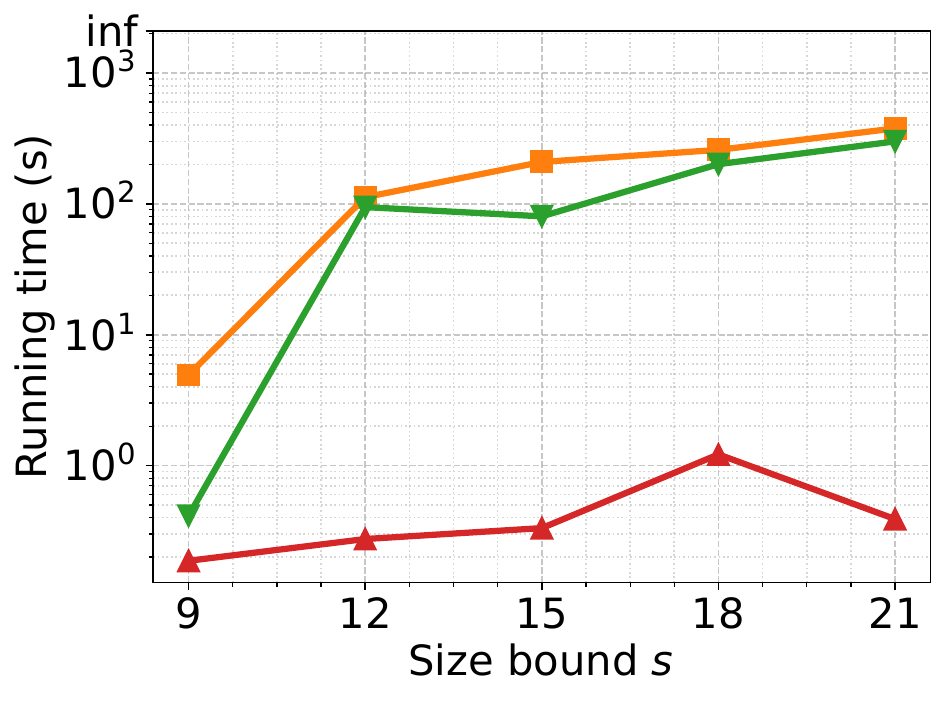}
    }
    \subfigure[DBLP]{
        \label{fig:DBLPefficiency}
        \includegraphics[width=.18\textwidth]{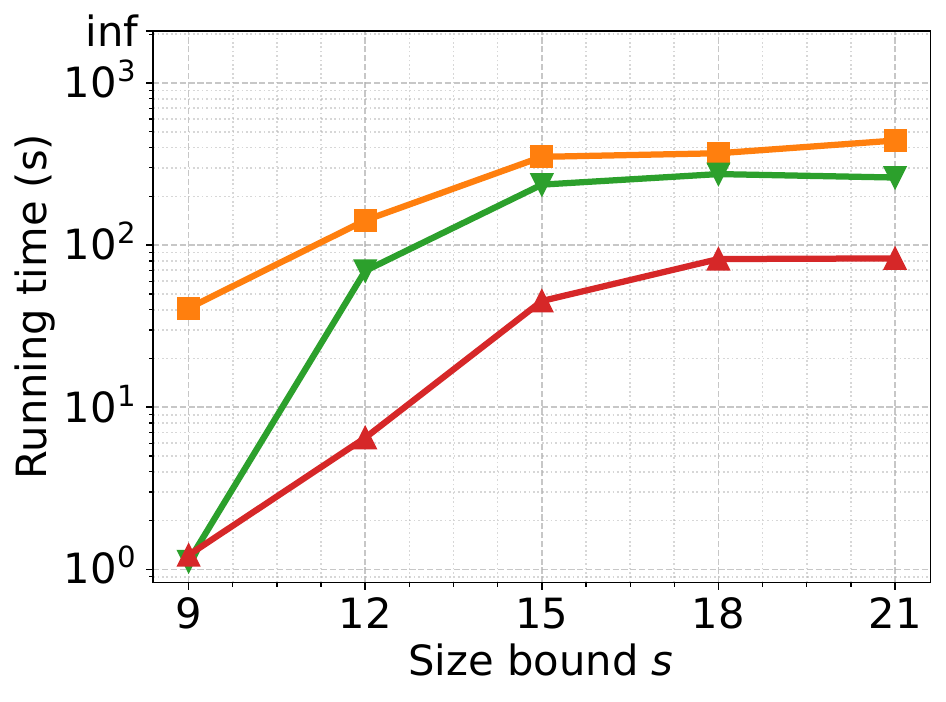}
    }
    \subfigure[DoubanMovie]{
        \label{fig:DoubanMovieefficiency}
        \includegraphics[width=.18\textwidth]{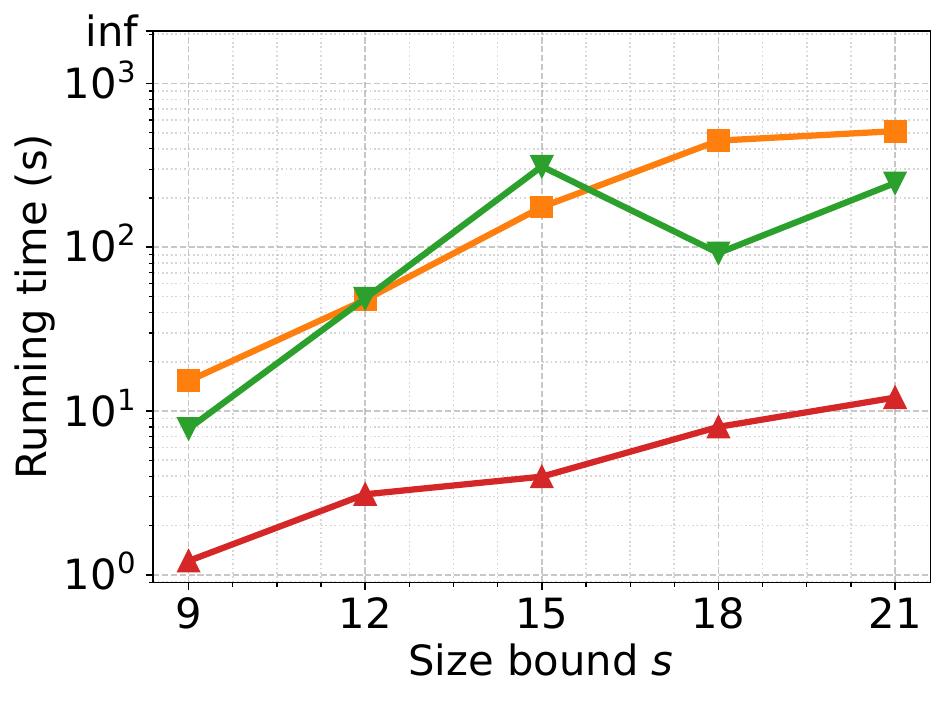}
    }
    \subfigure[Aminer]{
        \label{fig:Aminerefficiency}
        \includegraphics[width=.18\textwidth]{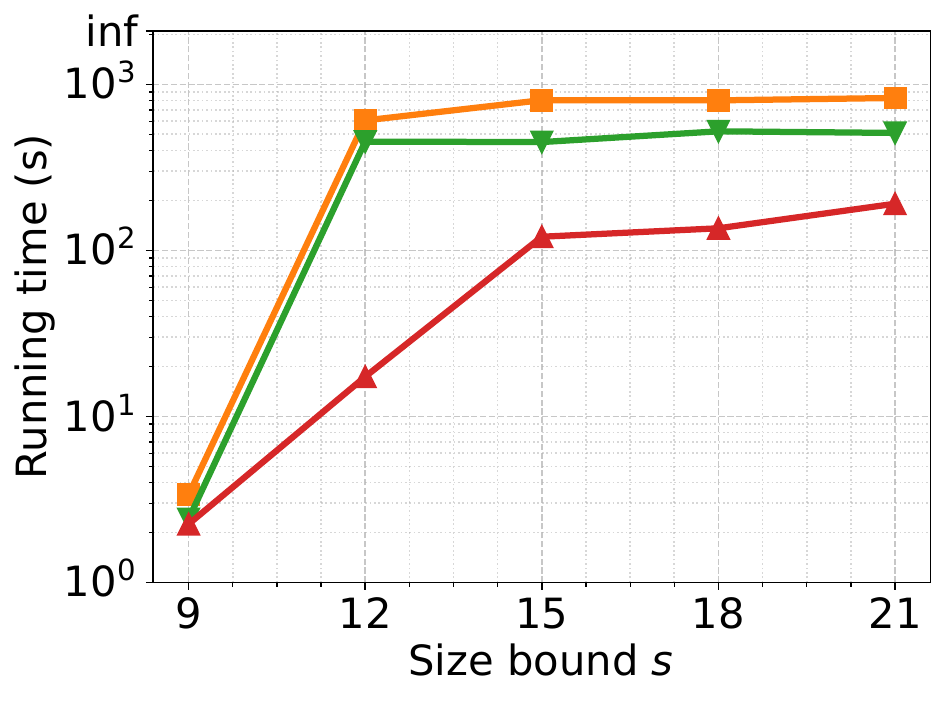}
    }
    \subfigure[Freebase]{
        \label{fig:Freebaseefficiency}
        \includegraphics[width=.18\textwidth]{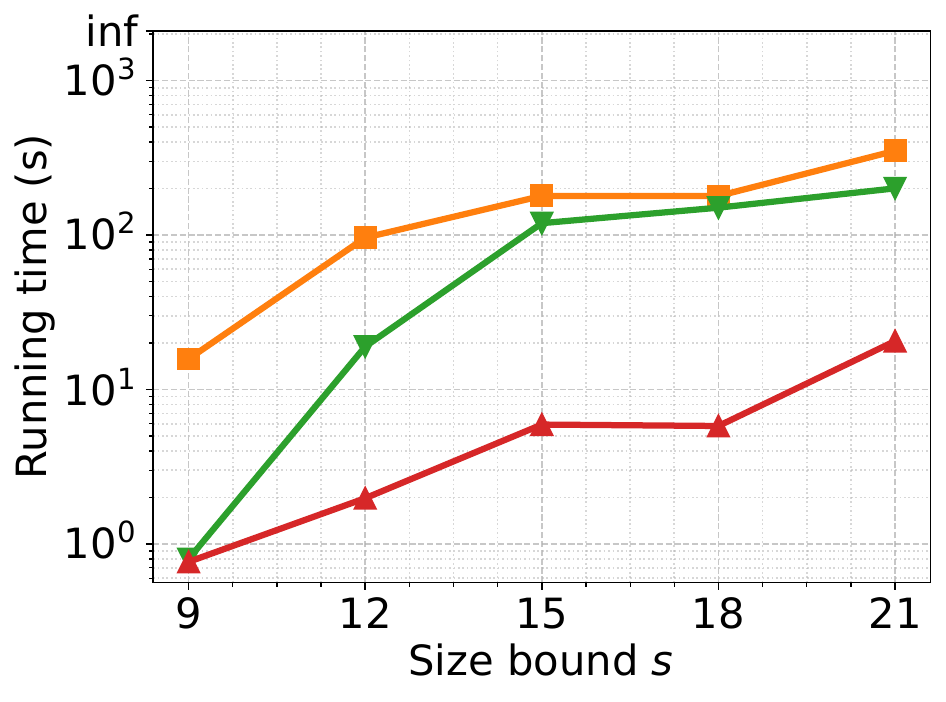}
    }
    \caption{Overall running times}
    \label{fig:efficiency}
\end{figure*}

SC-BRB is not applicable to the SCSH problem. We slightly modify ST-Exa to make it compatible with our setting. Specifically, when $|C| = s$, the computation of trussness is replaced by triangle-connected trussness.

%\subsubsection{Against baseline algorithms}

\noindent\textbf{Against baseline algorithms}. The overall running time is presented in Fig.~\ref{fig:efficiency}. As the size bound $s$ increases, the running time also increases. This is because a larger $s$ typically leads to a smaller $k$, which in turn reduces the number of pruned branches during the search.  However, the running time does not always increase monotonically with $s$. For example, on the DoubanMovie dataset, the running time of SCSHEP$^+$ decreases significantly when $s$ increases from 15 to 18 and even becomes worse than ST-Exa when $s=15$. This is because the heuristic is able to identify a better lower bound when $s=18$, which greatly enhances the effectiveness of pruning and reduction. In contrast, when $s=15$, the heuristic fails to find such a bound, resulting in a larger search space and longer running time despite the smaller size bound. On the other hand, the running time of SCSHEV$^+$ does not exhibit a similar drop over the same interval. This is because the gain from the better lower bound at $s=18$ is offset by the increased search space due to the larger size bound.

Among all datasets, SCSHEV$^+$ consistently achieves the best running time. As discussed in Section~\ref{sec:wrap}, although the ESG upper bound could be tighter than the NSG one, SCSHEP$^+$ performs worse than SCSHEV$^+$ in practice due to its high time complexity.
% This is primarily due to the presence of multiple $\mP$-pair combinations sharing the same trussness, which results in a broader and deeper enumeration tree. Moreover, the upper bound techniques are applied to partial solutions $C$. When the trussness of $C$ is high but the final solution has lower trussness, the algorithm tends to repeatedly explore $\mP$-pair combinations with trussness $\tau(C)$, significantly reducing its efficiency.

%\subsubsection{Different techniques} 
\noindent\textbf{Different techniques}. To evaluate the effectiveness of different techniques, we implement four variants: SCSHEV$^+$/H  heuristic lower bound, SCSHEV$^+$/R without reduction rules, SCSHEV$^+$/P without pruning rules based on upper bound, and SCSHEV$^+$/B without branching rules. Fig.~\ref{fig:DT} shows their running times on the Amazon and Aminer datasets as $s$ varies. We observe that running time increases with $s$, not only due to a larger search space but also because higher $s$ tend to lower the trussness of the final result, reducing the effectiveness of both reduction and pruning. Among all algorithms, SCSHEV$^+$ consistently achieves the best performance. Moreover, removing any of the techniques leads to increased running times, confirming their contributions. Without reduction rules, the impact is the most pronounced, resulting in the largest increase in running time.

\begin{figure}[h]
    \centering
    \vspace{-10pt}
    \subfigure{
        \includegraphics[width=.4\textwidth]{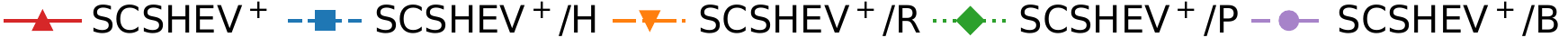}
    }
    \\[-10pt]
    \subfigure[Amazon]{
        \label{fig:AmazonDT}
        \includegraphics[width=.20\textwidth]{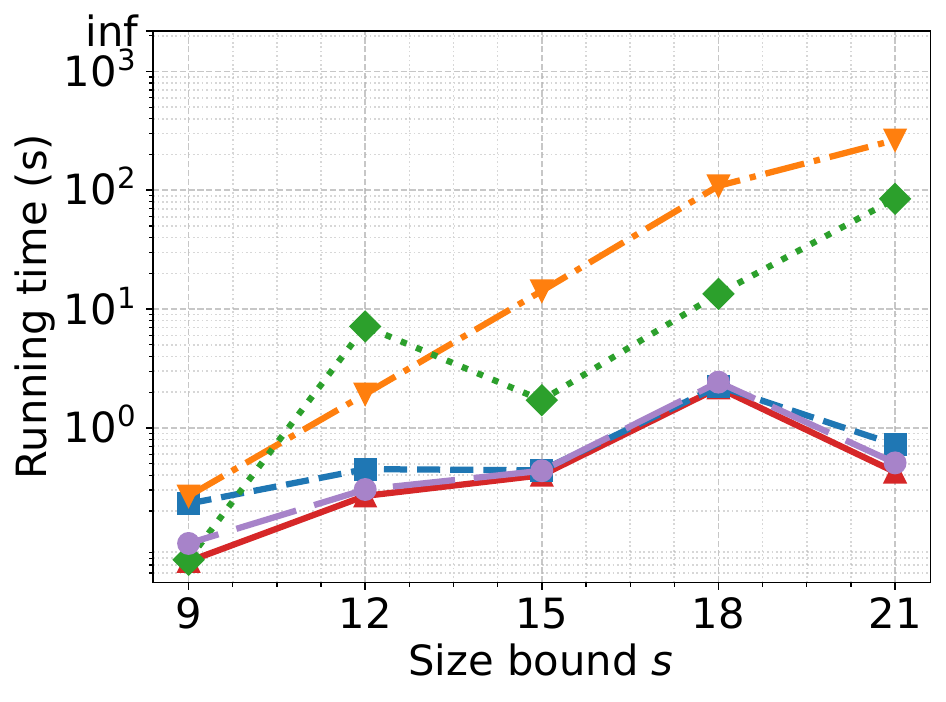}
    }
    \subfigure[Aminer]{
        \label{fig:AminerDT}
        \includegraphics[width=.20\textwidth]{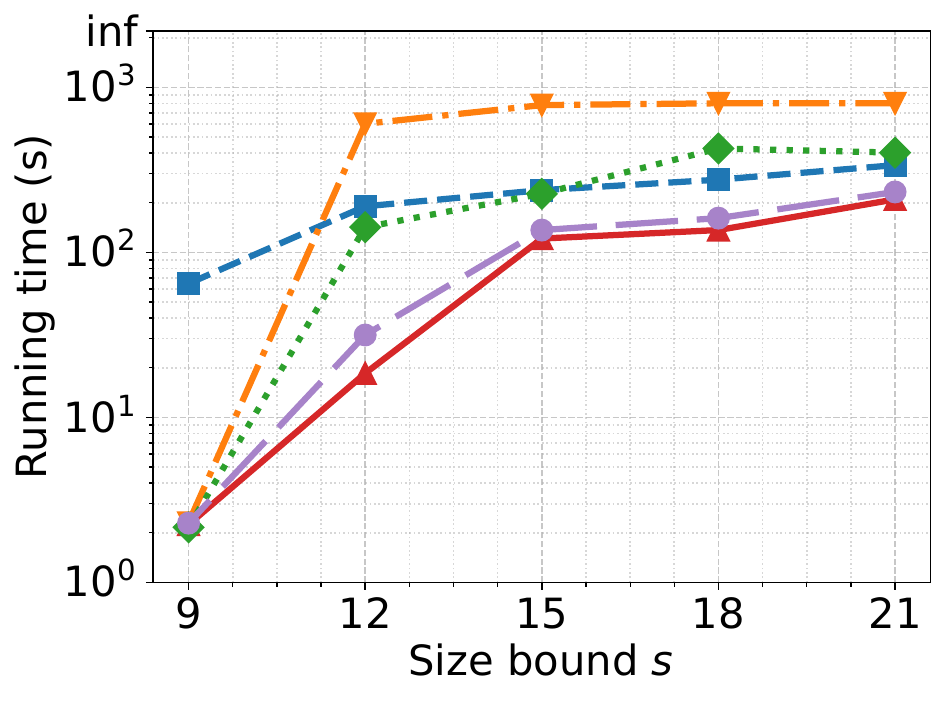}
    }

    \caption{Evaluate different techniques}
    \label{fig:DT}
\end{figure}

%\subsubsection{Different reduction rules} 
\noindent\textbf{Different reduction rules}. We also evaluate the impact of individual reduction rules on running time. Here, $R_1$ - $R_4$ correspond to Rule~\ref{rule:tauv}-Rule~\ref{rule:disv}. We incrementally incorporate these rules into SCSHEV, for example, SCSHEV+$R_2$ includes both $R_1$ and $R_2$, and so on. The running times on the Amazon and Aminer datasets are shown in Fig.~\ref{fig:DR}. From the figure, we observe that the algorithm with all reduction rules, SCSHEV+$R_3$, achieves the best performance. Each rule contributes to performance improvement, though to varying degrees. Some rules have a stronger impact, while others offer more modest gains or have overlapping effects, such as $R_2$ and $R_3$, resulting in smaller improvements.

\begin{figure}[h]
    \centering
    \vspace{-10pt}
    \subfigure{
        \includegraphics[width=.33\textwidth]{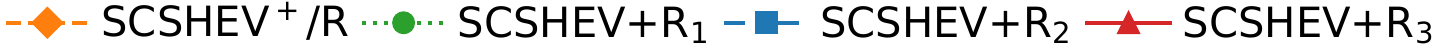}
    }
    \\[-10pt]
    \subfigure[Amazon]{
        \label{fig:AmazonDT}
        \includegraphics[width=.20\textwidth]{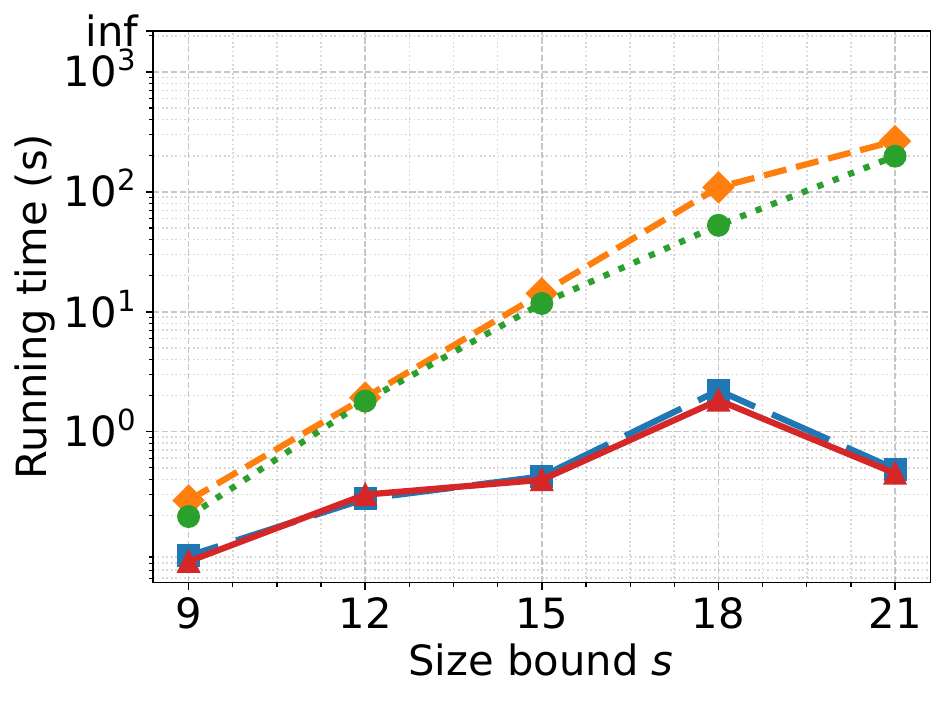}
    }
    \subfigure[Aminer]{
        \label{fig:AminerDT}
        \includegraphics[width=.20\textwidth]{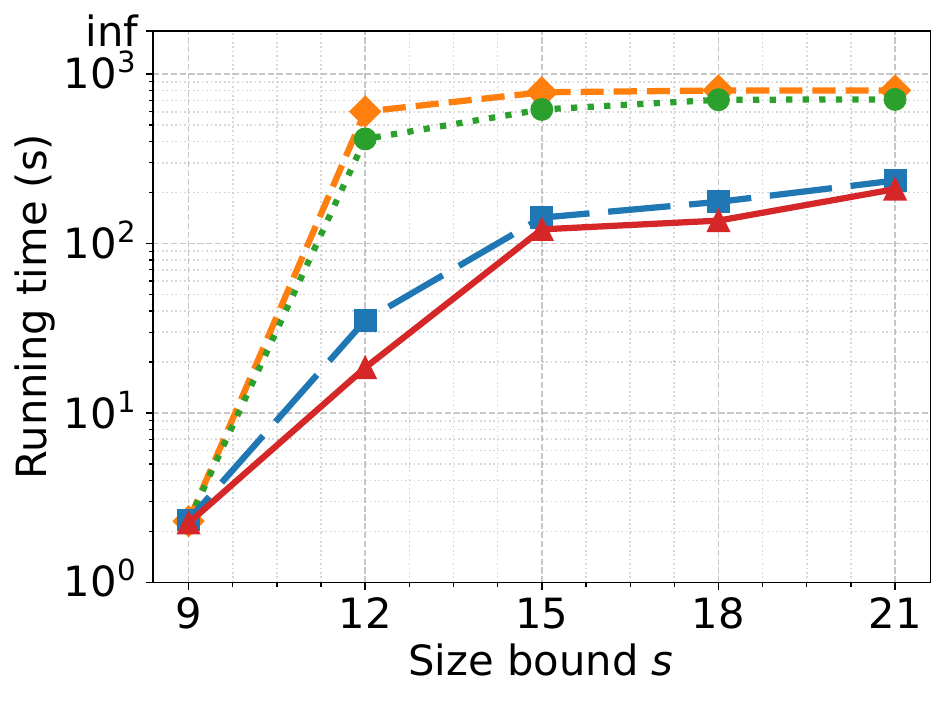}
    }

    \caption{Evaluate different reduction rules}
    \label{fig:DR}
\end{figure}

%\subsubsection{Different pruning rules} 
\noindent\textbf{Impacts of upper bounds}. Fig.~\ref{fig:DU} illustrates the impact of upper bounds. Following the idea from \cite{yao2021efficient}, we define three upper bounds: $U_1 = \min_{v \in C}(\tau_{G_P[C \cup R]}(v))$, $U_2 = \min_{v \in C}(\tau_{G_P[C]}(v)) + s - |C|$, and $U_3$, which is our proposed NSG upper bound. Based on these, we introduce three pruning strategies: $P_1$ corresponds to $UB(C, R) = U_1$, $P_2$ to $\min\{U_1, U_2\}$, and $P_3$ to $\min\{U_1, U_2, U_3\}$. As shown in the figure, SCSHEV+$P_3$ achieves the best performance, demonstrating the effectiveness of the pruning rules. On the Amazon dataset, $P_2$ and $P_3$ yield similar results, while on the Aminer dataset, $P_1$ and $P_2$ perform comparably. This suggests that, although the rules have overlapping effects, they are not redundant. The extent of overlap may vary depending on the underlying topology of the datasets.

\begin{figure}[h]
    \centering
    \vspace{-10pt}
    \subfigure{
        \includegraphics[width=.33\textwidth]{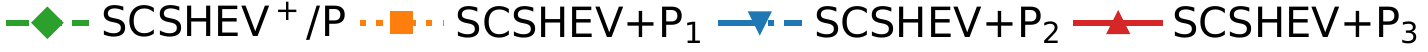}
    }
    \\[-10pt]
    \subfigure[Amazon]{
        \label{fig:AmazonDT}
        \includegraphics[width=.20\textwidth]{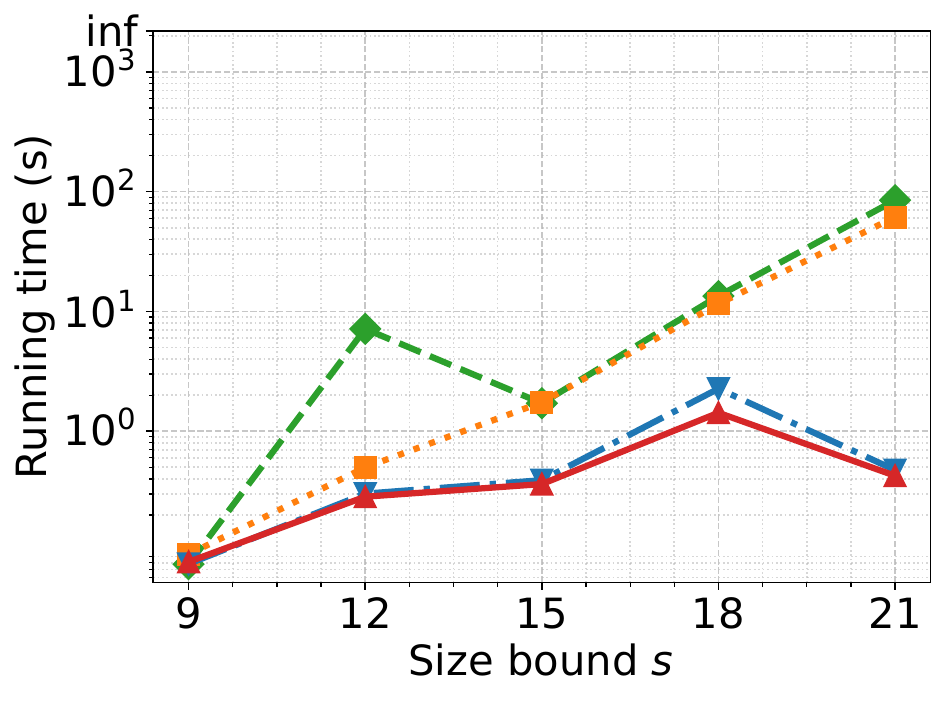}
    }
    \subfigure[Aminer]{
        \label{fig:AminerDT}
        \includegraphics[width=.20\textwidth]{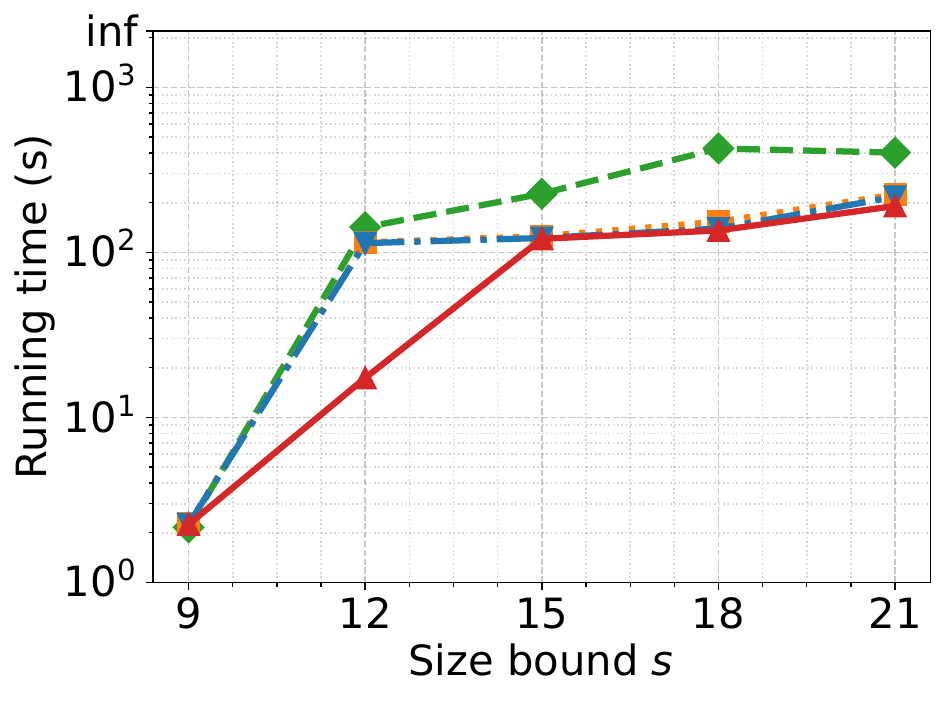}
    }

    \caption{Evaluate different pruning rules}
    \label{fig:DU}
\end{figure}

%\subsubsection{Scalability} 
\noindent\textbf{Scalability}. In this experiment, we evaluate the scalability of SCSHEP$^+$ and SCSHEV$^+$ by randomly sampling between 20\% and 100\% of the edges from the original graph. Fig.~\ref{fig:Scalability} reports the results on the Amazon and Aminer datasets with $s=12$. As the graph size increases, the running time also increases accordingly.

\begin{figure}[h]
    \centering
    \vspace{-10pt}
    \subfigure{
        \includegraphics[width=.16\textwidth]{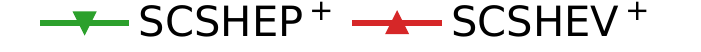}
    }
    \\[-10pt]
    \subfigure[Amazon]{
        \label{fig:AmazonDT}
        \includegraphics[width=.20\textwidth]{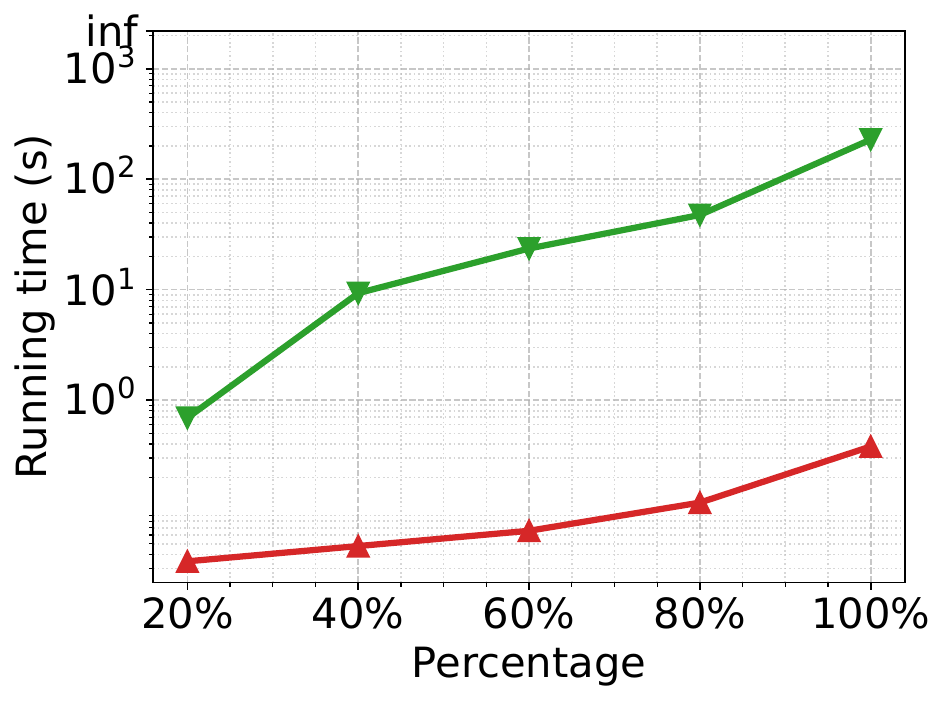}
    }
    \subfigure[Aminer]{
        \label{fig:AminerDT}
        \includegraphics[width=.20\textwidth]{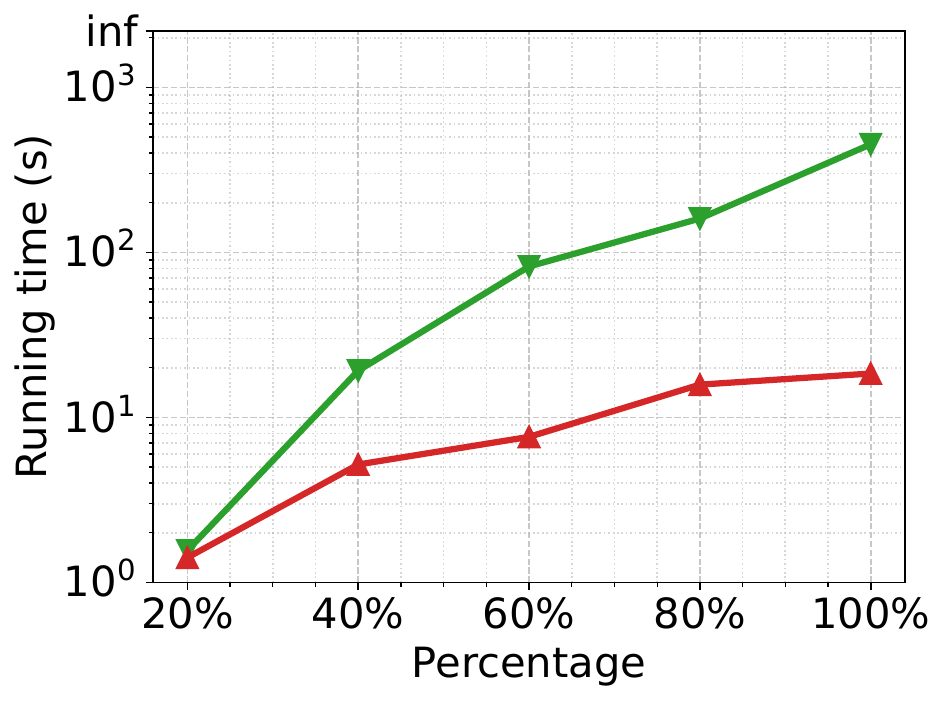}
    }

    \caption{Scalability}
    \label{fig:Scalability}
\end{figure}

\section{Related Works}
\label{sec7}
In this section, we review three categories of related works.
%: size-bounded community search in homogeneous graphs, community search in heterogeneous graphs, and size-bounded community search in heterogeneous graphs.

\noindent
\textbf{Size-bounded community search in homogeneous graphs.} In homogeneous graphs, many problems are related to community size, which can be broadly classified into the following categories. First, some studies impose explicit size constraints and aim to find the most cohesive communities satisfying conditions. 
For example, in the $k$-core model, the goal is to maximise $k$ under a given size bound. \cite{sozio2010community} is the first to introduce an upper bound on community size, proposing two greedy algorithms to find the near-optimal minimum-degree subgraph. 
Recently, \cite{yao2021efficient} extends this by introducing a size range $[l,h]$ and proposes both an improved greedy strategy and an exact algorithm. 
\cite{zhang2023size} follows a similar approach but adopts the $k$-truss model instead of the $k$-core. Second, some approaches fix cohesiveness and aim to find the smallest possible community satisfying it. \cite{li2020efficient} develops the PSA algorithm based on the $k$-core, while \cite{zhu2019pivotal} focuses on minimising $k$-truss-based communities. Finally, some studies~ \cite{liu2021efficient} jointly enforce both size and cohesiveness constraints (without necessarily maximising $k$ or minimising size). %\cite{liu2021efficient} proposes a method based on triangle-connected $k$-truss to find communities satisfying both constraints.

\noindent
\textbf{Community search in HINs.} \cite{fang2021cohesive} categorises heterogeneous graphs into two types: bipartite graphs and general HINs. In recent years, bipartite graphs have gained increasing attention, leading to the development of various cohesive subgraph models, including the ($\alpha$,$\beta$)-core \cite{liu2020efficient}, $k$-bitruss \cite{wang2022towards}, and biclique \cite{lyu2020maximum, chen2021efficient, yu2025finding}. Based on these models, numerous community search methods have been proposed \cite{wang2021efficient, wang2021discovering, abidi2022searching}. For general HINs, several models have been introduced to support community search, such as the ($k$, $\mP$)-core \cite{fang2020effective}, ($k$, $\mP$)-truss \cite{yao2021efficient}, motif-clique \cite{hu2019discovering, zhou2024efficient}, relational community \cite{jian2020effective, liu2025searching}, and densest subgraph \cite{chen2023densest}. In addition, some methods focus on extra-specific objectives like significance \cite{liu2024sach} and influence \cite{zhou2023influential}.

\noindent
\textbf{Size-bounded community search in HINs.} So far, only \cite{zhang2024size} has addressed size-bounded community search in bipartite graphs, which is a special case of HINs. This work is primarily based on the $(\alpha,\beta)$-core model. Since the model involves two cohesiveness parameters, it is difficult to compare the cohesiveness of different cores (e.g., a $(4,2)$-core vs. a $(3,3)$-core). As a result, \cite{zhang2024size} treats cohesiveness as a constraint rather than an optimisation objective.

\section{Conclusion}
\label{conclusion}
In this paper, we introduce the most cohesive size-bounded community search problem for HIN data, which is known to be NP-hard. To tackle this problem, we develop the \textsf{kcB\&B} framework, which naturally generates $s$ node sets with a complexity of $\mathcal{O}^*(\binom{n}{s})$. Based on this framework, two exact algorithms are proposed that enumerate edge sets and node sets, respectively. To enhance efficiency, we first develop a polynomial-time algorithm to find a feasible solution as a global lower bound by leveraging the properties of HINs. Additionally, based on triangles, we develop bounding and reduction techniques for both edge-based and node-based approaches, which, when combined with the lower bound, can significantly reduce the search space. In addition, \textsf{kcB\&B} enables early termination and branching strategies, further enhancing overall efficiency. Extensive experiments validate the effectiveness and efficiency of the proposed methods.

\clearpage

\bibliographystyle{ACM-Reference-Format}
%\bibliography{sample}

\begin{thebibliography}{38}

%%% ====================================================================
%%% NOTE TO THE USER: you can override these defaults by providing
%%% customized versions of any of these macros before the \bibliography
%%% command.  Each of them MUST provide its own final punctuation,
%%% except for \shownote{}, \showDOI{}, and \showURL{}.  The latter two
%%% do not use final punctuation, in order to avoid confusing it with
%%% the Web address.
%%%
%%% To suppress output of a particular field, define its macro to expand
%%% to an empty string, or better, \unskip, like this:
%%%
%%% \newcommand{\showDOI}[1]{\unskip}   % LaTeX syntax
%%%
%%% \def \showDOI #1{\unskip}           % plain TeX syntax
%%%
%%% ====================================================================

\ifx \showCODEN    \undefined \def \showCODEN     #1{\unskip}     \fi
\ifx \showDOI      \undefined \def \showDOI       #1{#1}\fi
\ifx \showISBNx    \undefined \def \showISBNx     #1{\unskip}     \fi
\ifx \showISBNxiii \undefined \def \showISBNxiii  #1{\unskip}     \fi
\ifx \showISSN     \undefined \def \showISSN      #1{\unskip}     \fi
\ifx \showLCCN     \undefined \def \showLCCN      #1{\unskip}     \fi
\ifx \shownote     \undefined \def \shownote      #1{#1}          \fi
\ifx \showarticletitle \undefined \def \showarticletitle #1{#1}   \fi
\ifx \showURL      \undefined \def \showURL       {\relax}        \fi
% The following commands are used for tagged output and should be
% invisible to TeX
\providecommand\bibfield[2]{#2}
\providecommand\bibinfo[2]{#2}
\providecommand\natexlab[1]{#1}
\providecommand\showeprint[2][]{arXiv:#2}

\bibitem[\protect\citeauthoryear{Abidi, Chen, Zhou, and Liu}{Abidi et~al\mbox{.}}{2022}]%
        {abidi2022searching}
\bibfield{author}{\bibinfo{person}{Aman Abidi}, \bibinfo{person}{Lu Chen}, \bibinfo{person}{Rui Zhou}, {and} \bibinfo{person}{Chengfei Liu}.} \bibinfo{year}{2022}\natexlab{}.
\newblock \showarticletitle{Searching Personalized $k$-Wing in Bipartite Graphs}.
\newblock \bibinfo{journal}{\emph{IEEE Transactions on Knowledge and Data Engineering}} \bibinfo{volume}{35}, \bibinfo{number}{8} (\bibinfo{year}{2022}), \bibinfo{pages}{8515--8528}.
\newblock


\bibitem[\protect\citeauthoryear{Al-Baghdadi and Lian}{Al-Baghdadi and Lian}{2020}]%
        {al2020topic}
\bibfield{author}{\bibinfo{person}{Ahmed Al-Baghdadi} {and} \bibinfo{person}{Xiang Lian}.} \bibinfo{year}{2020}\natexlab{}.
\newblock \showarticletitle{Topic-based community search over spatial-social networks}.
\newblock \bibinfo{journal}{\emph{Proceedings of the VLDB Endowment}} \bibinfo{volume}{13}, \bibinfo{number}{12} (\bibinfo{year}{2020}), \bibinfo{pages}{2104--2117}.
\newblock


\bibitem[\protect\citeauthoryear{Chen, Liu, Zhou, Liao, Xu, and Li}{Chen et~al\mbox{.}}{2023}]%
        {chen2023densest}
\bibfield{author}{\bibinfo{person}{Lu Chen}, \bibinfo{person}{Chengfei Liu}, \bibinfo{person}{Rui Zhou}, \bibinfo{person}{Kewen Liao}, \bibinfo{person}{Jiajie Xu}, {and} \bibinfo{person}{Jianxin Li}.} \bibinfo{year}{2023}\natexlab{}.
\newblock \showarticletitle{Densest multipartite subgraph search in heterogeneous information networks}.
\newblock \bibinfo{journal}{\emph{Proceedings of the VLDB Endowment}} \bibinfo{volume}{17}, \bibinfo{number}{4} (\bibinfo{year}{2023}), \bibinfo{pages}{699--711}.
\newblock


\bibitem[\protect\citeauthoryear{Chen, Liu, Zhou, Xu, and Li}{Chen et~al\mbox{.}}{2021}]%
        {chen2021efficient}
\bibfield{author}{\bibinfo{person}{Lu Chen}, \bibinfo{person}{Chengfei Liu}, \bibinfo{person}{Rui Zhou}, \bibinfo{person}{Jiajie Xu}, {and} \bibinfo{person}{Jianxin Li}.} \bibinfo{year}{2021}\natexlab{}.
\newblock \showarticletitle{Efficient exact algorithms for maximum balanced biclique search in bipartite graphs}. In \bibinfo{booktitle}{\emph{Proceedings of the 2021 International Conference on Management of Data}}. \bibinfo{publisher}{Association for Computing Machinery}, \bibinfo{pages}{248--260}.
\newblock


\bibitem[\protect\citeauthoryear{Cohen}{Cohen}{2008}]%
        {cohen2008trusses}
\bibfield{author}{\bibinfo{person}{Jonathan Cohen}.} \bibinfo{year}{2008}\natexlab{}.
\newblock \showarticletitle{Trusses: Cohesive subgraphs for social network analysis}.
\newblock \bibinfo{journal}{\emph{National security agency technical report}} \bibinfo{volume}{16}, \bibinfo{number}{3.1} (\bibinfo{year}{2008}), \bibinfo{pages}{1--29}.
\newblock


\bibitem[\protect\citeauthoryear{Dittrich, Klau, Rosenwald, Dandekar, and M{\"u}ller}{Dittrich et~al\mbox{.}}{2008}]%
        {dittrich2008identifying}
\bibfield{author}{\bibinfo{person}{Marcus~T Dittrich}, \bibinfo{person}{Gunnar~W Klau}, \bibinfo{person}{Andreas Rosenwald}, \bibinfo{person}{Thomas Dandekar}, {and} \bibinfo{person}{Tobias M{\"u}ller}.} \bibinfo{year}{2008}\natexlab{}.
\newblock \showarticletitle{Identifying functional modules in protein--protein interaction networks: an integrated exact approach}.
\newblock \bibinfo{journal}{\emph{Bioinformatics}} \bibinfo{volume}{24}, \bibinfo{number}{13} (\bibinfo{year}{2008}), \bibinfo{pages}{i223--i231}.
\newblock


\bibitem[\protect\citeauthoryear{Fang, Cheng, Luo, Hu, and Li}{Fang et~al\mbox{.}}{2017}]%
        {fang2017effective}
\bibfield{author}{\bibinfo{person}{Yixiang Fang}, \bibinfo{person}{CK Cheng}, \bibinfo{person}{S Luo}, \bibinfo{person}{J Hu}, {and} \bibinfo{person}{X Li}.} \bibinfo{year}{2017}\natexlab{}.
\newblock \showarticletitle{Effective community search over large spatial graphs}.
\newblock \bibinfo{journal}{\emph{Proceedings of the VLDB Endowment}} \bibinfo{volume}{10}, \bibinfo{number}{6} (\bibinfo{year}{2017}), \bibinfo{pages}{709–720}.
\newblock


\bibitem[\protect\citeauthoryear{Fang, Wang, Lin, and Zhang}{Fang et~al\mbox{.}}{2021}]%
        {fang2021cohesive}
\bibfield{author}{\bibinfo{person}{Yixiang Fang}, \bibinfo{person}{Kai Wang}, \bibinfo{person}{Xuemin Lin}, {and} \bibinfo{person}{Wenjie Zhang}.} \bibinfo{year}{2021}\natexlab{}.
\newblock \showarticletitle{Cohesive subgraph search over big heterogeneous information networks: Applications, challenges, and solutions}. In \bibinfo{booktitle}{\emph{Proceedings of the 2021 International Conference on Management of Data}}. \bibinfo{publisher}{Association for Computing Machinery}, \bibinfo{pages}{2829--2838}.
\newblock


\bibitem[\protect\citeauthoryear{Fang, Yang, Zhang, Lin, and Cao}{Fang et~al\mbox{.}}{2020}]%
        {fang2020effective}
\bibfield{author}{\bibinfo{person}{Yixiang Fang}, \bibinfo{person}{Yixing Yang}, \bibinfo{person}{Wenjie Zhang}, \bibinfo{person}{Xuemin Lin}, {and} \bibinfo{person}{Xin Cao}.} \bibinfo{year}{2020}\natexlab{}.
\newblock \showarticletitle{Effective and efficient community search over large heterogeneous information networks}.
\newblock \bibinfo{journal}{\emph{Proceedings of the VLDB Endowment}} \bibinfo{volume}{13}, \bibinfo{number}{6} (\bibinfo{year}{2020}), \bibinfo{pages}{854--867}.
\newblock


\bibitem[\protect\citeauthoryear{Hu, Cheng, Chang, Sankar, Fang, and Lam}{Hu et~al\mbox{.}}{2019}]%
        {hu2019discovering}
\bibfield{author}{\bibinfo{person}{Jiafeng Hu}, \bibinfo{person}{Reynold Cheng}, \bibinfo{person}{Kevin Chen-Chuan Chang}, \bibinfo{person}{Aravind Sankar}, \bibinfo{person}{Yixiang Fang}, {and} \bibinfo{person}{Brian~YH Lam}.} \bibinfo{year}{2019}\natexlab{}.
\newblock \showarticletitle{Discovering maximal motif cliques in large heterogeneous information networks}. In \bibinfo{booktitle}{\emph{2019 IEEE 35th International Conference on Data Engineering}}. IEEE, \bibinfo{pages}{746--757}.
\newblock


\bibitem[\protect\citeauthoryear{Jian, Wang, and Chen}{Jian et~al\mbox{.}}{2020}]%
        {jian2020effective}
\bibfield{author}{\bibinfo{person}{Xun Jian}, \bibinfo{person}{Yue Wang}, {and} \bibinfo{person}{Lei Chen}.} \bibinfo{year}{2020}\natexlab{}.
\newblock \showarticletitle{Effective and efficient relational community detection and search in large dynamic heterogeneous information networks}.
\newblock \bibinfo{journal}{\emph{Proceedings of the VLDB Endowment}} \bibinfo{volume}{13}, \bibinfo{number}{10} (\bibinfo{year}{2020}), \bibinfo{pages}{1723--1736}.
\newblock


\bibitem[\protect\citeauthoryear{Jiang, Fang, Ma, Cao, and Li}{Jiang et~al\mbox{.}}{2022}]%
        {jiang2022effective}
\bibfield{author}{\bibinfo{person}{Yangqin Jiang}, \bibinfo{person}{Yixiang Fang}, \bibinfo{person}{Chenhao Ma}, \bibinfo{person}{Xin Cao}, {and} \bibinfo{person}{Chunshan Li}.} \bibinfo{year}{2022}\natexlab{}.
\newblock \showarticletitle{Effective community search over large star-schema heterogeneous information networks}.
\newblock \bibinfo{journal}{\emph{Proceedings of the VLDB Endowment}} \bibinfo{volume}{15}, \bibinfo{number}{11} (\bibinfo{year}{2022}), \bibinfo{pages}{2307--2320}.
\newblock


\bibitem[\protect\citeauthoryear{Knuth}{Knuth}{2011}]%
        {Knuth2011Art}
\bibfield{author}{\bibinfo{person}{Donald~E. Knuth}.} \bibinfo{year}{2011}\natexlab{}.
\newblock \bibinfo{booktitle}{\emph{The Art of Computer Programming: Combinatorial Algorithms, Part 1} (\bibinfo{edition}{1st} ed.)}.
\newblock \bibinfo{publisher}{Addison-Wesley Professional}.
\newblock
\showISBNx{0201038048}


\bibitem[\protect\citeauthoryear{Lappas, Liu, and Terzi}{Lappas et~al\mbox{.}}{2009}]%
        {lappas2009finding}
\bibfield{author}{\bibinfo{person}{Theodoros Lappas}, \bibinfo{person}{Kun Liu}, {and} \bibinfo{person}{Evimaria Terzi}.} \bibinfo{year}{2009}\natexlab{}.
\newblock \showarticletitle{Finding a team of experts in social networks}. In \bibinfo{booktitle}{\emph{Proceedings of the 15th ACM SIGKDD international conference on Knowledge discovery and data mining}}. \bibinfo{publisher}{Association for Computing Machinery}, \bibinfo{pages}{467--476}.
\newblock


\bibitem[\protect\citeauthoryear{Li, Zhang, Zhang, Qin, Zhang, and Lin}{Li et~al\mbox{.}}{2020}]%
        {li2020efficient}
\bibfield{author}{\bibinfo{person}{Conggai Li}, \bibinfo{person}{Fan Zhang}, \bibinfo{person}{Ying Zhang}, \bibinfo{person}{Lu Qin}, \bibinfo{person}{Wenjie Zhang}, {and} \bibinfo{person}{Xuemin Lin}.} \bibinfo{year}{2020}\natexlab{}.
\newblock \showarticletitle{Efficient progressive minimum $k$-core search}.
\newblock \bibinfo{journal}{\emph{Proceedings of the VLDB Endowment}} \bibinfo{volume}{13}, \bibinfo{number}{3} (\bibinfo{year}{2020}), \bibinfo{pages}{361--374}.
\newblock


\bibitem[\protect\citeauthoryear{Liu, Yuan, Lin, Qin, Zhang, and Zhou}{Liu et~al\mbox{.}}{2020}]%
        {liu2020efficient}
\bibfield{author}{\bibinfo{person}{Boge Liu}, \bibinfo{person}{Long Yuan}, \bibinfo{person}{Xuemin Lin}, \bibinfo{person}{Lu Qin}, \bibinfo{person}{Wenjie Zhang}, {and} \bibinfo{person}{Jingren Zhou}.} \bibinfo{year}{2020}\natexlab{}.
\newblock \showarticletitle{Efficient ($\alpha$, $\beta$)-core computation in bipartite graphs}.
\newblock \bibinfo{journal}{\emph{The VLDB Journal}} \bibinfo{volume}{29}, \bibinfo{number}{5} (\bibinfo{year}{2020}), \bibinfo{pages}{1075--1099}.
\newblock


\bibitem[\protect\citeauthoryear{Liu, Zhang, Zhang, Lin, and Zhang}{Liu et~al\mbox{.}}{2021}]%
        {liu2021efficient}
\bibfield{author}{\bibinfo{person}{Boge Liu}, \bibinfo{person}{Fan Zhang}, \bibinfo{person}{Wenjie Zhang}, \bibinfo{person}{Xuemin Lin}, {and} \bibinfo{person}{Ying Zhang}.} \bibinfo{year}{2021}\natexlab{}.
\newblock \showarticletitle{Efficient community search with size constraint}. In \bibinfo{booktitle}{\emph{2021 IEEE 37th International Conference on Data Engineering}}. IEEE, \bibinfo{pages}{97--108}.
\newblock


\bibitem[\protect\citeauthoryear{Liu, Chen, Liu, and Zhou}{Liu et~al\mbox{.}}{2025}]%
        {liu2025searching}
\bibfield{author}{\bibinfo{person}{Xuan Liu}, \bibinfo{person}{Lu Chen}, \bibinfo{person}{Chengfei Liu}, {and} \bibinfo{person}{Rui Zhou}.} \bibinfo{year}{2025}\natexlab{}.
\newblock \showarticletitle{{ Searching society over large heterogeneous information networks }}. In \bibinfo{booktitle}{\emph{2025 IEEE 41st International Conference on Data Engineering}}. IEEE, \bibinfo{pages}{1402--1414}.
\newblock


\bibitem[\protect\citeauthoryear{Liu, Guo, Xu, Bao, Shen, and Cheng}{Liu et~al\mbox{.}}{2024}]%
        {liu2024sach}
\bibfield{author}{\bibinfo{person}{Yanghao Liu}, \bibinfo{person}{Fangda Guo}, \bibinfo{person}{Bingbing Xu}, \bibinfo{person}{Peng Bao}, \bibinfo{person}{Huawei Shen}, {and} \bibinfo{person}{Xueqi Cheng}.} \bibinfo{year}{2024}\natexlab{}.
\newblock \showarticletitle{SACH: significant-attributed community search in heterogeneous information networks}. In \bibinfo{booktitle}{\emph{2024 IEEE 40th International Conference on Data Engineering}}. \bibinfo{pages}{3283--3296}.
\newblock


\bibitem[\protect\citeauthoryear{Lyu, Qin, Lin, Zhang, Qian, and Zhou}{Lyu et~al\mbox{.}}{2020}]%
        {lyu2020maximum}
\bibfield{author}{\bibinfo{person}{Bingqing Lyu}, \bibinfo{person}{Lu Qin}, \bibinfo{person}{Xuemin Lin}, \bibinfo{person}{Ying Zhang}, \bibinfo{person}{Zhengping Qian}, {and} \bibinfo{person}{Jingren Zhou}.} \bibinfo{year}{2020}\natexlab{}.
\newblock \showarticletitle{Maximum biclique search at billion scale}.
\newblock \bibinfo{journal}{\emph{Proceedings of the VLDB Endowment}} \bibinfo{volume}{13}, \bibinfo{number}{9} (\bibinfo{year}{2020}), \bibinfo{pages}{1359–1372}.
\newblock


\bibitem[\protect\citeauthoryear{Malliaros, Giatsidis, Papadopoulos, and Vazirgiannis}{Malliaros et~al\mbox{.}}{2020}]%
        {malliaros2020core}
\bibfield{author}{\bibinfo{person}{Fragkiskos~D Malliaros}, \bibinfo{person}{Christos Giatsidis}, \bibinfo{person}{Apostolos~N Papadopoulos}, {and} \bibinfo{person}{Michalis Vazirgiannis}.} \bibinfo{year}{2020}\natexlab{}.
\newblock \showarticletitle{The core decomposition of networks: Theory, algorithms and applications}.
\newblock \bibinfo{journal}{\emph{The VLDB Journal}} \bibinfo{volume}{29}, \bibinfo{number}{1} (\bibinfo{year}{2020}), \bibinfo{pages}{61--92}.
\newblock


\bibitem[\protect\citeauthoryear{Seidman}{Seidman}{1983}]%
        {seidman1983network}
\bibfield{author}{\bibinfo{person}{Stephen~B Seidman}.} \bibinfo{year}{1983}\natexlab{}.
\newblock \showarticletitle{Network structure and minimum degree}.
\newblock \bibinfo{journal}{\emph{Social networks}} \bibinfo{volume}{5}, \bibinfo{number}{3} (\bibinfo{year}{1983}), \bibinfo{pages}{269--287}.
\newblock


\bibitem[\protect\citeauthoryear{Sozio and Gionis}{Sozio and Gionis}{2010}]%
        {sozio2010community}
\bibfield{author}{\bibinfo{person}{Mauro Sozio} {and} \bibinfo{person}{Aristides Gionis}.} \bibinfo{year}{2010}\natexlab{}.
\newblock \showarticletitle{The community-search problem and how to plan a successful cocktail party}. In \bibinfo{booktitle}{\emph{Proceedings of the 16th ACM SIGKDD international conference on Knowledge discovery and data mining}}. \bibinfo{publisher}{Association for Computing Machinery}, \bibinfo{pages}{939--948}.
\newblock


\bibitem[\protect\citeauthoryear{Sun, Han, Yan, Yu, and Wu}{Sun et~al\mbox{.}}{2011}]%
        {sun2011pathsim}
\bibfield{author}{\bibinfo{person}{Yizhou Sun}, \bibinfo{person}{Jiawei Han}, \bibinfo{person}{Xifeng Yan}, \bibinfo{person}{Philip~S Yu}, {and} \bibinfo{person}{Tianyi Wu}.} \bibinfo{year}{2011}\natexlab{}.
\newblock \showarticletitle{Pathsim: Meta path-based top-$k$ similarity search in heterogeneous information networks}.
\newblock \bibinfo{journal}{\emph{Proceedings of the VLDB Endowment}} \bibinfo{volume}{4}, \bibinfo{number}{11} (\bibinfo{year}{2011}), \bibinfo{pages}{992--1003}.
\newblock


\bibitem[\protect\citeauthoryear{Wang and Cheng}{Wang and Cheng}{2012}]%
        {wang2012truss}
\bibfield{author}{\bibinfo{person}{Jia Wang} {and} \bibinfo{person}{James Cheng}.} \bibinfo{year}{2012}\natexlab{}.
\newblock \showarticletitle{Truss Decomposition in Massive Networks}.
\newblock \bibinfo{journal}{\emph{Proceedings of the VLDB Endowment}} \bibinfo{volume}{5}, \bibinfo{number}{9} (\bibinfo{year}{2012}), \bibinfo{pages}{812–823}.
\newblock


\bibitem[\protect\citeauthoryear{Wang, Zhang, Lin, Zhang, Qin, and Zhang}{Wang et~al\mbox{.}}{2021a}]%
        {wang2021efficient}
\bibfield{author}{\bibinfo{person}{Kai Wang}, \bibinfo{person}{Wenjie Zhang}, \bibinfo{person}{Xuemin Lin}, \bibinfo{person}{Ying Zhang}, \bibinfo{person}{Lu Qin}, {and} \bibinfo{person}{Yuting Zhang}.} \bibinfo{year}{2021}\natexlab{a}.
\newblock \showarticletitle{Efficient and effective community search on large-scale bipartite graphs}. In \bibinfo{booktitle}{\emph{2021 IEEE 37th International Conference on Data Engineering}}. IEEE, \bibinfo{pages}{85--96}.
\newblock


\bibitem[\protect\citeauthoryear{Wang, Zhang, Zhang, Qin, and Zhang}{Wang et~al\mbox{.}}{2021b}]%
        {wang2021discovering}
\bibfield{author}{\bibinfo{person}{Kai Wang}, \bibinfo{person}{Wenjie Zhang}, \bibinfo{person}{Ying Zhang}, \bibinfo{person}{Lu Qin}, {and} \bibinfo{person}{Yuting Zhang}.} \bibinfo{year}{2021}\natexlab{b}.
\newblock \showarticletitle{Discovering significant communities on bipartite graphs: An index-based approach}.
\newblock \bibinfo{journal}{\emph{IEEE Transactions on Knowledge and Data Engineering}} \bibinfo{volume}{35}, \bibinfo{number}{3} (\bibinfo{year}{2021}), \bibinfo{pages}{2471--2485}.
\newblock


\bibitem[\protect\citeauthoryear{Wang, Xu, Jian, Zhou, and Chen}{Wang et~al\mbox{.}}{2022}]%
        {wang2022towards}
\bibfield{author}{\bibinfo{person}{Yue Wang}, \bibinfo{person}{Ruiqi Xu}, \bibinfo{person}{Xun Jian}, \bibinfo{person}{Alexander Zhou}, {and} \bibinfo{person}{Lei Chen}.} \bibinfo{year}{2022}\natexlab{}.
\newblock \showarticletitle{Towards distributed bitruss decomposition on bipartite graphs}.
\newblock \bibinfo{journal}{\emph{Proceedings of the VLDB Endowment}} \bibinfo{volume}{15}, \bibinfo{number}{9} (\bibinfo{year}{2022}), \bibinfo{pages}{1889--1901}.
\newblock


\bibitem[\protect\citeauthoryear{Xu, Lu, and Zhu}{Xu et~al\mbox{.}}{2022}]%
        {xuEfficientTriangleConnectedTruss2022}
\bibfield{author}{\bibinfo{person}{Tianyang Xu}, \bibinfo{person}{Zhao Lu}, {and} \bibinfo{person}{Yuanyuan Zhu}.} \bibinfo{year}{2022}\natexlab{}.
\newblock \showarticletitle{Efficient triangle-connected truss community search in dynamic graphs}.
\newblock \bibinfo{journal}{\emph{Proceedings of the VLDB Endowment}} \bibinfo{volume}{16}, \bibinfo{number}{3} (\bibinfo{year}{2022}), \bibinfo{pages}{519--531}.
\newblock


\bibitem[\protect\citeauthoryear{Yang, Fang, Lin, and Zhang}{Yang et~al\mbox{.}}{2020}]%
        {yang2020effective}
\bibfield{author}{\bibinfo{person}{Yixing Yang}, \bibinfo{person}{Yixiang Fang}, \bibinfo{person}{Xuemin Lin}, {and} \bibinfo{person}{Wenjie Zhang}.} \bibinfo{year}{2020}\natexlab{}.
\newblock \showarticletitle{Effective and efficient truss computation over large heterogeneous information networks}. In \bibinfo{booktitle}{\emph{2020 IEEE 36th International Conference on Data Engineering}}. IEEE, \bibinfo{pages}{901--912}.
\newblock


\bibitem[\protect\citeauthoryear{Yao and Chang}{Yao and Chang}{2021}]%
        {yao2021efficient}
\bibfield{author}{\bibinfo{person}{Kai Yao} {and} \bibinfo{person}{Lijun Chang}.} \bibinfo{year}{2021}\natexlab{}.
\newblock \showarticletitle{Efficient size-bounded community search over large networks}.
\newblock \bibinfo{journal}{\emph{Proceedings of the VLDB Endowment}} \bibinfo{volume}{14}, \bibinfo{number}{8} (\bibinfo{year}{2021}), \bibinfo{pages}{1441--1453}.
\newblock


\bibitem[\protect\citeauthoryear{Yu, Ren, Sun, Gu, Sturt, Khandelwal, Norick, and Han}{Yu et~al\mbox{.}}{2014}]%
        {yu2014personalized}
\bibfield{author}{\bibinfo{person}{Xiao Yu}, \bibinfo{person}{Xiang Ren}, \bibinfo{person}{Yizhou Sun}, \bibinfo{person}{Quanquan Gu}, \bibinfo{person}{Bradley Sturt}, \bibinfo{person}{Urvashi Khandelwal}, \bibinfo{person}{Brandon Norick}, {and} \bibinfo{person}{Jiawei Han}.} \bibinfo{year}{2014}\natexlab{}.
\newblock \showarticletitle{Personalized entity recommendation: A heterogeneous information network approach}. In \bibinfo{booktitle}{\emph{Proceedings of the 7th ACM international conference on Web search and data mining}}. \bibinfo{publisher}{Association for Computing Machinery}, \bibinfo{pages}{283--292}.
\newblock


\bibitem[\protect\citeauthoryear{Yu, Zhou, Li, Chen, and Liu}{Yu et~al\mbox{.}}{2025}]%
        {yu2025finding}
\bibfield{author}{\bibinfo{person}{Xintong Yu}, \bibinfo{person}{Rui Zhou}, \bibinfo{person}{Xiaofan Li}, \bibinfo{person}{Lu Chen}, {and} \bibinfo{person}{Chengfei Liu}.} \bibinfo{year}{2025}\natexlab{}.
\newblock \showarticletitle{{ Finding a summary for all maximal bicliques }}. In \bibinfo{booktitle}{\emph{2025 IEEE 41st International Conference on Data Engineering}}. IEEE, \bibinfo{pages}{3984--3997}.
\newblock


\bibitem[\protect\citeauthoryear{Zhang, Guo, Ouyang, Yang, Lin, and Tian}{Zhang et~al\mbox{.}}{2023}]%
        {zhang2023size}
\bibfield{author}{\bibinfo{person}{Fan Zhang}, \bibinfo{person}{Haicheng Guo}, \bibinfo{person}{Dian Ouyang}, \bibinfo{person}{Shiyu Yang}, \bibinfo{person}{Xuemin Lin}, {and} \bibinfo{person}{Zhihong Tian}.} \bibinfo{year}{2023}\natexlab{}.
\newblock \showarticletitle{Size-constrained community search on large networks: An effective and efficient solution}.
\newblock \bibinfo{journal}{\emph{IEEE Transactions on Knowledge and Data Engineering}} \bibinfo{volume}{36}, \bibinfo{number}{1} (\bibinfo{year}{2023}), \bibinfo{pages}{356--371}.
\newblock


\bibitem[\protect\citeauthoryear{Zhang, Wang, Zhang, Ni, and Lin}{Zhang et~al\mbox{.}}{2024}]%
        {zhang2024size}
\bibfield{author}{\bibinfo{person}{Yuting Zhang}, \bibinfo{person}{Kai Wang}, \bibinfo{person}{Wenjie Zhang}, \bibinfo{person}{Wei Ni}, {and} \bibinfo{person}{Xuemin Lin}.} \bibinfo{year}{2024}\natexlab{}.
\newblock \showarticletitle{Size-bounded community search over large bipartite graph}. In \bibinfo{booktitle}{\emph{Proceedings of the 27th International Conference on Extending Database Technology}}. \bibinfo{pages}{320--331}.
\newblock


\bibitem[\protect\citeauthoryear{Zhou, Fang, Luo, and Ye}{Zhou et~al\mbox{.}}{2023}]%
        {zhou2023influential}
\bibfield{author}{\bibinfo{person}{Yingli Zhou}, \bibinfo{person}{Yixiang Fang}, \bibinfo{person}{Wensheng Luo}, {and} \bibinfo{person}{Yunming Ye}.} \bibinfo{year}{2023}\natexlab{}.
\newblock \showarticletitle{Influential community search over large heterogeneous information networks}.
\newblock \bibinfo{journal}{\emph{Proceedings of the VLDB Endowment}} \bibinfo{volume}{16}, \bibinfo{number}{8} (\bibinfo{year}{2023}), \bibinfo{pages}{2047--2060}.
\newblock


\bibitem[\protect\citeauthoryear{Zhou, Fang, Ma, Hou, and Huang}{Zhou et~al\mbox{.}}{2024}]%
        {zhou2024efficient}
\bibfield{author}{\bibinfo{person}{Yingli Zhou}, \bibinfo{person}{Yixiang Fang}, \bibinfo{person}{Chenhao Ma}, \bibinfo{person}{Tianci Hou}, {and} \bibinfo{person}{Xin Huang}.} \bibinfo{year}{2024}\natexlab{}.
\newblock \showarticletitle{Efficient maximal motif-clique enumeration over large heterogeneous information networks}.
\newblock \bibinfo{journal}{\emph{Proceedings of the VLDB Endowment}} \bibinfo{volume}{17}, \bibinfo{number}{11} (\bibinfo{year}{2024}), \bibinfo{pages}{2946--2959}.
\newblock


\bibitem[\protect\citeauthoryear{Zhu, Zhang, Chen, Wang, Zhang, and Lin}{Zhu et~al\mbox{.}}{2019}]%
        {zhu2019pivotal}
\bibfield{author}{\bibinfo{person}{Weijie Zhu}, \bibinfo{person}{Mengqi Zhang}, \bibinfo{person}{Chen Chen}, \bibinfo{person}{Xiaoyang Wang}, \bibinfo{person}{Fan Zhang}, {and} \bibinfo{person}{Xuemin Lin}.} \bibinfo{year}{2019}\natexlab{}.
\newblock \showarticletitle{Pivotal relationship identification: the $k$-truss minimization problem}. In \bibinfo{booktitle}{\emph{Proceedings of the 28th International Joint Conference on Artificial Intelligence}}. \bibinfo{publisher}{International Joint Conferences on Artificial Intelligence Organization}, \bibinfo{pages}{4874--4880}.
\newblock


\end{thebibliography}
%%% -*-BibTeX-*-
%%% Do NOT edit. File created by BibTeX with style
%%% ACM-Reference-Format-Journals [18-Jan-2012].

\end{document}